\renewcommand{\phi}{\varphi}
\newtheorem{theorem}{Theorem}
\newtheorem{corollary}{Corollary}
\newtheorem{lemma}{Lemma}
\theoremstyle{definition}
\newtheorem{example}{Example}
\newcommand{\set}[1]{\ensuremath{\left\{#1\right\}}}
\newcommand{\sem}[1]{\ensuremath{\llbracket#1\rrbracket}}
\newcommand{\I}[1]{\mathit{#1}}
\newcommand{\R}[1]{\textrm{#1}}
\newcommand{\mc}[1]{\mathcal{#1}}
\newcommand{\mt}{\mapsto}
\newcommand{\meet}{\sqcap}
\newcommand{\la}{\leftarrow}
\newcommand{\ra}{\rightarrow}
\newcommand{\asyn}{\ensuremath{\rightarrow}}
\newcommand{\rf}[1]{(\ref{#1})} 
\newcommand{\el}{v}
\newcommand\Bi{\begin{itemize}}
\newcommand\Ei{\end{itemize}}
\newcommand\Be{\begin{enumerate}}
\newcommand\Ee{\end{enumerate}}
\newcommand{\items}[1]{
  \Bi\item{#1}\itrest}
\newcommand{\itrest}{\@ifnextchar\bgroup{\itrestp}{\Ei}}
\newcommand{\itrestp}[1]{\item{#1}\@ifnextchar\bgroup{\itrestp}{\Ei}}
\newcommand{\enrest}{\@ifnextchar\bgroup{\enrestp}{\Ee}}
\newcommand{\enrestp}[1]{\item{#1}\@ifnextchar\bgroup{\enrestp}{\Ee}}
\newcommand\Bt{\begin{tabular}}
\newcommand\Et{\end{tabular}}
\newcommand{\cell}[1]{#1&}
\newcommand{\cells}[1]{%
  \forcsvlist{\cell}{#1}%
}
\newcommand{\tabrest}{\@ifnextchar\bgroup{\tabrestp}{\Et}}
\newcommand{\tabrestp}[2]{\\ \cells{#1}#2\@ifnextchar\bgroup{\tabrestp}{\Et}}
\begin{document}

\title{\bf Bridging Abstract Dialectical Argumentation\\ and Boolean Gene Regulation}

\author{Eugenio Azpeitia\thanks{Centro de Ciencias Matemáticas, Universidad Nacional Autónoma de México; email: \texttt{eazpeitia@matmor.unam.mx}}
  \and
  Stan Mu\~noz Guti\'errez\thanks{Institute of Software Technology, Graz University of Technology; email: \texttt{sgutierr@ist.tugraz.at}}
  \and
  David A.\ Rosenblueth\thanks{Instituto de Investigaciones en Matem\'aticas Aplicadas y en Sistemas, Universidad Nacional Aut\'onoma de M\'exico; email: \texttt{drosenbl@unam.mx}}
  \and
  Octavio Zapata\thanks{Instituto de Matem\'aticas, Universidad Nacional Aut\'onoma de M\'exico; email: \texttt{octavio@im.unam.mx}}
  }

\maketitle

\begin{abstract}
This paper leans on two similar areas so far detached from each other.
On the one hand,
Dung's pioneering contributions to \emph{abstract} argumentation, almost thirty years ago,
gave rise to a plethora of successors,
including \emph{abstract dialectical frameworks} (ADFs).
On the other hand, \emph{Boolean networks} (BNs),
devised as models of gene regulation,
have been successful for studying the behavior of molecular processes within cells.
ADFs and BNs are similar to each other:
both can be viewed as functions from vectors of bits to vectors of bits.
As soon as similarities emerge between these two formalisms, however, differences appear.
For example,
conflict-freedom is prominent in argumentation (where we are interested in a self-consistent, i.e., conflict-free, set of beliefs) but absent in BNs.
By contrast, asynchrony (where only one gene is updated at a time) is conspicuous in BNs and lacking in argumentation.
Finally, while a monotonicity-based notion occurs in signed reasoning of both argumentation and gene regulation, a different, derivative-based notion only appears in the BN literature.
To identify common mathematical structure between both formalisms, these differences need clarification.
This contribution is a partial review of both these areas, where we cover enough ground to exhibit their more evident similarities, to then reconcile some of their
apparent
differences.
We highlight a range of avenues of research resulting from ironing out discrepancies between these two fields.
Unveiling their common concerns should enable these two areas to cross-fertilize so as to transfer ideas and results between each other.
\end{abstract}

\section{Introduction}
Argumentation theory studies reasoning as performed by a collection of agents.
\emph{Abstract argumentation}, in particular, dispenses with the structure of arguments, producing a simple, elegant formalism.
At the same time,
gene networks model biological interaction of combined gene activity.
\emph{Boolean gene networks}, specifically, represent the activation level of a gene with only two possible values, yielding an approach located in the abstract end of the spectrum.
Our objective is to draw attention to the common mathematical structure between abstract argumentation and Boolean gene networks.
We review both areas, show commonalities, and reconcile some of their differences.

\paragraph{Argumentation.}
The study of reasoning has been, at least since Ancient Greece, closely linked with logic.
Deductive inference in classical logic thus became to be regarded as the ideal correct reasoning.
Argumentation, by contrast, had been studied as a branch of rhetoric, and hence separate from logic.
There has been a recent shift, however, incorporating argumentation as stemming from communication between reasoners.
A critique of the standard position began in philosophy, with Toulmin's pioneering book~\cite{toulmin-58}.
His contribution is twofold.
First, he discarded the traditionally assumed one-to-one mapping between natural-language statements and logical formulas, and
noticed 
the importance of speech communication in reasoning.
He also
studied arguments employed by goal-directed agents in support of their objectives~\cite{woods-21}.
His work influenced first the informal-logic community and next gave rise to multiple developments~\cite{woods-21}.

Argumentation has also become a relevant subject in psychology.
It is interesting that humans have numerous cognitive biases, such as the confirmation bias or the sunk-cost fallacy.
As a way to explain these biases, researchers have proposed that human reasoning properly belongs to discussions, persuasion, and argument exchange~\cite{mercier-sperber-11,mercier-sperber-17}.

Another field where abstract argumentation has had a significant impact is legal reasoning.
In a legal case, decisions are made and conclusions arrived at through
conflict resolution and deliberation. 
Consequently, legal applications lend themselves to be modeled with formal argumentation techniques.
There is hence a large body of literature in this area~\cite{bench-capon-dunne-06,prakken-21}.

Although some precursory ideas~\cite{pollock-87,pollock-91,pollock-92} had already been published, Dung's 1995 article~\cite{dung-95} is usually credited as giving birth to the field of abstract argumentation.
In this contribution, we thus begin by summarizing Dung's work.
Dung's argumentation frameworks, however,
only consider counterarguments: the sole relation between arguments is that of ``attack''.
Hence, our goal is, instead, to give an account of abstract ``dialectical'' frameworks~\cite{brewka-woltran-10}, having more general forms of argument interaction than Dung's 
relation.
Abstract argumentation has branched off in numerous directions and the bibliography of this area is vast.
Rather than summarizing the state of the art in abstract argumentation, our purpose is to cover enough material to reveal similarities and differences with Boolean gene networks.
Hence, we limit ourselves to finite frameworks and omit connections with logic programming.
We refer the reader to~\cite{baroni-gabbay-giacomin-van-der-torre-18,bench-capon-dunne-07} for overviews of abstract-argumentation research.

\paragraph{Boolean gene networks.}
von Neumann 
marked~\cite{von-neumann-66} the advent of Boolean networks, further developed~\cite{kauffman-69} by Kauffman as models of gene interaction.
Kauffman's original formalism models the temporal changes in the activity of the genes due to their regulatory interactions.
States are vectors of gene values and all genes are updated synchronously (i.e., simultaneously); stable states correspond to the observed gene activity of cellular types.
Since Robert's pivotal contributions~\cite{robert-78,robert-80,robert-86}, 
Kauffman's original Boolean networks have been 
modified with asynchrony (where only one gene is updated at a time) by Thomas~\cite{thomas-d-ari-90}.
More recently, Paulev\'e has put forth ``most permisive'' Boolean networks~\cite{pauleve-kolcak-chatain-haar-20},
able to reproduce all behaviors of a corresponding quantitative model.

A Boolean model of a cell is often built using information from \emph{two} experimental sources: (a) interactions between genes (where often each interaction is reported in a single article) and (b) a set of expected set of stable states (corresponding to a set of cellular types).
Usually each gene interaction has an associated sign, denoting either an activation or a repression.

Boolean networks are now not only an important model of gene regulation~\cite{abou-jaoude-et-al-16} but relevant in their own right~\cite{akutsu-18,gaucherel-thero-puiseux-bonhomme-17,grandi-lorini-perrussel-15,montagud-et-al-22,poindron-21,rozum-campbell-newby-nasrollahi-albert-23}.
Although usually called Boolean \emph{gene} network, this formalism is also used for modeling signal transduction, metabolic pathways~\cite{albert-thakar-li-zhang-albert-08}, and ecological processes~\cite{nasrollahi-campbell-albert-23}, cellular processes~\cite{bhattacharyya-et-al-21}, and opinion diffusion~\cite{grandi-lorini-perrussel-15}.
Reviews are:~\cite{abou-jaoude-et-al-16,albert-thakar-14,le-novere-15,naldi-et-al-15,samaga-klamt-13,schwab-kuhlwein-ikonomi-kuhl-kestler-20,wang-saadatpour-albert-12}.

\paragraph{Shared notions.}
It is immediately apparent that there must be commonalities underlying abstract argumentation frameworks and Boolean networks.
Both an abstract dialectical framework and a Boolean network can be viewed as a function from vectors of bits to vectors of bits.
A more scrupulous study, however, must address a number of questions suggesting that perhaps performing such an amalgamation is more subtle than it appears at first sight.

Let us identify some shared notions first.
Argumentation's acceptance conditions play the role of Boolean network's regulatory functions.
An abstract dialectical framework minus its acceptance conditions corresponds to the interaction graph of a Boolean network.
Two-valued interpretations are states.
Three-valued interpretations~\cite{brewka-ellmauthaler-strass-wallner-woltran-13}, in turn, have been developed in the Boolean-network literature under several names: partial states~\cite{irons-06}, symbolic states~\cite{klarner-14,klarner-bockmayr-siebert-14,klarner-bockmayr-siebert-15,klarner-siebert-15}, and ``states of a subset of the inputs''~\cite{zanudo-albert-13}.
Furthermore, admissible sets of arguments have been also distinguished in the context of Boolean networks 
as ``hypercubes''~\cite{moon-lee-pauleve-23},``trap spaces''~\cite{klarner-14} and ``quasi-attractors''~\cite{zanudo-albert-13};
complete extensions are 
symbolic steady states.
Even more remarkably,
\emph{preferred extensions are minimal trap spaces}.

\paragraph{Differences.}
These commonalities are significant in view of abstract dialectical frameworks and Boolean networks having divergent goals. 
Ideally, argumentation considers a framework solved when a single solution is found, comprising of a set of self-consistent set of arguments attacking other arguments.
In general, however, Dung frameworks may not have a single solution, so that work in abstract argumentation usually discriminates between skeptical and credulous reasoning, depending on whether an argument is accepted in all extensions or one extension of a particular semantics.

Gene-regulation modeling, by contrast, identifies each solution with a different cell type, and therefore accepts several solutions.
More generally, argumentation is typically interested in a static solution, whereas gene regulation, having evolved (as a research field) from dynamical systems, incorporates \emph{time} in its models.

Another significant difference is the absence of the concept of \emph{conflict-freedom} in
Boolean gene regulation.
Moreover, already when transiting from Dung's abstract argumentation frameworks to abstract dialectical frameworks, conflict-freedom seems to have disappeared.
(To be sure, several notions called conflict-freeness have been introduced in abstract dialectical frameworks, but these notions play a different role from that of Dung's formalism as they do not appear in the definition of admissibility.)
This is puzzling as often abstract dialectical frameworks are referred to as a generalization of Dung's formalism.

This brings us to the question of whether or not abstract dialectical frameworks are a generalization of Dung's formalism.
We observe that the semantics developed by~\cite{brewka-ellmauthaler-strass-wallner-woltran-13} for abstract dialectical frameworks coincides with Caminada's~\cite{caminada-08} semantics for Dung frameworks, when an abstract dialectical framework is also a Dung framework.
We also show that an explicit treatment of
conflict-freedom is disposed of 
in Caminada's approach, and therefore so it is in abstract dialectical frameworks (where it has been reintroduced) and in Boolean networks.

By comparison, after Thomas's work~\cite{thomas-d-ari-90}, \emph{asynchrony}
pervades much of the gene Boolean-network literature, whereas it does not seem to have a corresponding notion in argumentation.
It is easier to see argumentation as related to \emph{synchronous} Boolean networks.
We explain away this discrepancy.
We observe that
although the ultimate interest in some of the Boolean-network contributions may be in asynchronous update, the features shared with argumentation
are independent of the update discipline.
Moreover, we remark that the asynchrony of gene Boolean networks does occur in opinion diffusion viewed as a generalization of argumentation.

\paragraph{Computation.}
Once we have dealt with essential similarities and differences between these fields, we turn our attention to computation.
On the argumentation side, we summarize an algorithm~\cite{strass-wallner-15} inspired by complexity analysis in abstract dialectical frameworks.
An implementation is the tool \texttt{k++ADF}~\cite{linsbichler-maratea-niskanen-wallner-woltran-18}, which repetitively calls a Boolean satisfiability (SAT) solver.
On the Boolean gene network side, we cover
\texttt{tsconj}~\cite{trinh-benhamou-pastva-soliman-24}, which uses answer set programming.

One tool can compute notions of the opposite camp: \texttt{k++ADF} can obtain maximal trap spaces and \texttt{tsconj} can obtain preferred interpretations.
It is not clear, however, how to compute notions such as the ``naive'' semantics of argumentation or ``cyclic'' attractors with tools of the opposite camp.

\paragraph{Signed reasoning.}
Further commonship and disagreement between abstract argumentation and Boolean networks appears with signed reasoning.
Arbitrary abstract dialectical frameworks are sometimes restricted~\cite{brewka-woltran-10} to having only ``supporting'' and ``attacking'' acceptance conditions, based on \emph{monotonicity}.
These are ``bipolar'' networks.

This notion also occurs in Boolean regulation, where it has its roots back to Waddington in 1942, who developed the concept of ``canalization''~\cite{waddington-42} to explain why the wild type of an organism 
(i.e., the form occurring in nature under the influence of natural selection) is less variable than most mutants.
(The name canalization alludes to robustness.) 
Kauffman interpreted~\cite{kauffman-71,kauffman-74},\cite[pp.\ 203, 204]{kauffman-93} this concept in the context of Boolean networks, giving a biological basis~\cite{raeymaekers-02} to a subset of all possible Boolean networks.
Moreover, his interpretation coincides~\cite{li-adeyeye-murrugarra-aguilar-laubenbacher-13} with monotonicity and hence with supporting and attacking acceptance conditions.
(There is also a large body of literature in digital-circuit theory, where this notion is called unate function, e.g.,~\cite{balogh-dong-lidicky-mani-zhao-23,mcnaughton-61}~\cite[p 61]{o-donell-14}.)
Works employing this notion of interaction in Boolean networks are:~\cite{akutsu-melkman-tamura-yamamoto-11,aracena-08,aracena-richard-salinas-17,benevs-brim-kadlecaj-pastva-safranek-20,li-adeyeye-murrugarra-aguilar-laubenbacher-13,moon-lee-pauleve-23,mori-mochizuki-17,veliz-cuba-11}.

Nevertheless, signed reasoning in Boolean networks is the source of another discrepancy between dialectical argumentation and Boolean regulation.
We devote our attention to a different kind of interaction also occurring in the Boolean-network literature,
based on the \emph{discretization of the time derivative}~\cite[p 47]{o-donell-14} instead of on monotonicity.
Boolean gene networks can be viewed as an abstraction of more faithful models of gene regulation, employing differential equations.
Thus, another way of giving a biological basis to certain Boolean networks is to abstract to a Boolean version the time derivative.
The sign taken from the time derivative can be justified as an approximation of the Hill function~\cite[p 9, 10]{alon-19}.
Works using the time-derivative gene interaction are:~\cite{aracena-cabrera-crot-salinas-21,comet-et-al-13,munoz-carrillo-azpeitia-rosenblueth-18,naldi-thieffry-chaouiya-07,richard-19,richard-rossignol-comet-bernot-guespin-michel-merieau-12,richard-ruet-13}.

It is natural to wonder whether or not both notions are equivalent to each other.
We observe that it is possible to reconcile them.
First we remark that the monotonicity-based interaction is the negation of the derivative-based interaction of the opposite sign.
We next note that by explicitly constraining the derivative-based interaction to be \emph{strict},
both notions coincide.
It is not necessary, however, to explicitly take strict monotonicity-based interactions.
Moreover, when non-strict interactions are used, the possibility of expressing \emph{optional} interactions arises.

\paragraph{Structure of this contribution.}
The rest of this contribution is organized as follows.
We
devote Sect.~\ref{sect-arg} to argumentation frameworks, where we explain away conflict-freedom and cover abstract dialectical frameworks.
Sect.~\ref{sect-bool} deals with Boolean networks, observing that in spite of stemming from a time-related concern (attractors), trap spaces do not depend on the update discipline.
Sect.~\ref{sect-adm-trap} summarizes prominent computer tools for computation of admissible interpretations/trap spaces and reports results of our tests on an argumentation tool and a Boolean-network tool.
We cover signed reasoning in Sect.~\ref{sect-signed}, where we delve into the differences between
monotonicity-based and derivative-based signs.
Concluding remarks are in Sect.~\ref{sect-concl}.

\section{Argumentation frameworks}
\label{sect-arg}
In this section, we first review Dung's abstract argumentation frameworks~\cite{dung-95}.
Next, we give Caminada labelings~\cite{caminada-06a} for Dung frameworks, which serve as a stepping stone to abstract dialectical frameworks~\cite{brewka-woltran-10}.
We subsequently summarize the semantics developed~\cite{brewka-ellmauthaler-strass-wallner-woltran-13} by Brewka, Ellmauthaler, Strass, Waller, and Woltran for abstract dialectical frameworks, and show that their characteristic operator can be viewed as a generalization of Caminada's version of the update function.

\subsection{Dung's abstract argumentation frameworks}

\paragraph{Dung framework.}
A \emph{Dung framework} consists of a finite set $A$ whose elements are called \emph{arguments},  and a set $R$ of ordered pairs of arguments.   
If $(a,b)\in R$, then we say that the argument $a$ \emph{attacks} the argument $b$, or that $b$ is an \emph{attacker} of $a$, and we represent this graphically by drawing a directed edge (i.e., an arrow) from $a$ to $b$. 

\begin{example}
\label{ex-running}
Consider the following example, of a child trying to persuade her mother that she may play. 
The set $A = \set{a,b,c,d}$ contains the following arguments:  
  \begin{itemize}
  \item[($d$)] mother: ``You cannot play because it's late.''
  \item[($c$)] Eloise: ``But it's only 19h00.''
  \item[($b$)] mother: ``You cannot play until you've done your homework.''
  \item[($a$)] Eloise: ``I've finished my homework.''
  \end{itemize}
 The set $R=\set{(a,b), (b,c), (c,d), (d,c)}$ is the relation  of attack between the elements of $A$. 
 This example can be described by the arrow diagram in Figure~\ref{fig-running}.
 \end{example}

\begin{figure}[ht]
\begin{center}
\includegraphics[scale=0.8]{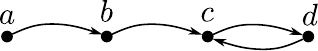}
\end{center}
\caption{\label{fig-running} Graphical representation of Example~\ref{ex-running}.} 
\end{figure}

 \paragraph{Notation.}
If $\mc{D} = (A,R)$ is a Dung framework and $a\in A$, then we write 
\[
\R{par}(a) = \{b\in A \mid  (b,a)\in R \}
\]
 for the set of attackers of $a$. 
In the context of dialectic argumentation (see Sect.~\ref{sect-adf}), the elements of $\R{par}(a)$ are called the \emph{parents} of $a$. 
For each subset  $X\subseteq A$,  we write  
\[
R(X) = \{b\in A \mid (a,b) \in R \textrm{ for some } a \in X\}
\] 
for the set of arguments which are \emph{attacked by} $X$.  
We write $A\setminus X$ for the set-theoretic complement of $X$, that is, for the set of all elements of $A$ that do not belong to $X$:
\[
A\setminus X = \{b\in A \mid b\notin X\}.
\]

 \paragraph{Stable extension.}
 We say that $X\subseteq A$ is  \emph{conflict-free} (also called \emph{independent set}~\cite{neumann-lara-71}) if for no arguments $a$ and $b$ in $X$ we have that $a$ attacks $b$.
In other words, $X$ is conflict-free if the elements of $X$ only attack arguments that belong to the complement of $X$. 
Since $A\setminus R(X)$ is the set of arguments not attacked by $X$, it follows that $X$  is conflict-free if and only if   $X\subseteq A\setminus R(X)$.
If each argument in $A\setminus X$ is attacked by $X$, then
we say that $X$ is \emph{emitting}. 
In symbols, $X$ is emitting if $A\setminus X \subseteq R(X)$. 
Thus $X$ is emitting if and only if $X \supseteq A\setminus R(X)$. 
A \emph{stable extension} is a subset of arguments which is conflict-free and emitting.  
Hence, a necessary and sufficient condition that $X$ be a stable extension is that 
\begin{equation}
  X = A\setminus R(X). \label{stable}
\end{equation}
An (impressionistic) example of a stable extension is displayed in Figure~\ref{fig-stable-extension}.

\begin{figure}[ht]
\begin{center}
\includegraphics[scale=0.5]{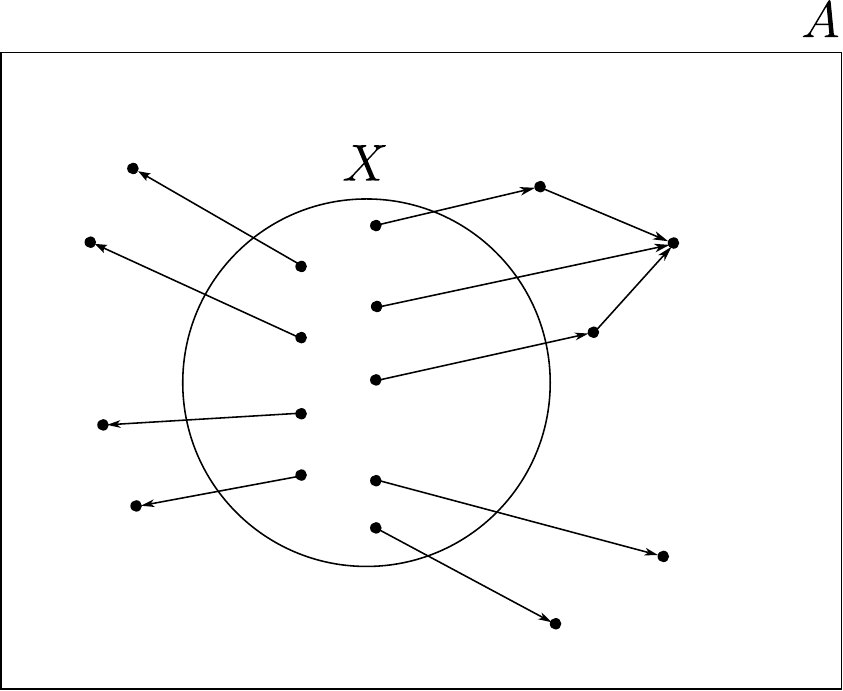}
\end{center}
\caption{\label{fig-stable-extension} Example of a stable extension.
All arguments in $X$ attack all arguments not in $X$.}
\end{figure}

 \paragraph{Subsets versus two-valued functions.}
Let $\{0,1\}^{A}$ be the set of all functions from $A$ into $\set{0,1}$. 
The elements of $\{0,1\}^{A}$ will be called \emph{two-valued functions} on $A$. 
The constant functions $0\in\{0,1\}^{A}$ and $1\in\{0,1\}^{A}$ are defined, for each $a\in A$, by $0(a)=0$ and $1(a)=1$.
Each subset $X\subseteq A$ is associated with a two-valued  function $x \in \set{0,1}^A$, called the \emph{indicator function} of $X$, defined by 
\[
x(a) = 
\begin{cases}
1	& \R{if } \, a\in X,\\
0	& \R{if } \, a\notin X.
\end{cases}
\]
 The correspondence that maps each subset to its indicator function is a one-to-one correspondence from the set of all subsets of $A$ onto  the set  of two-valued functions on $A$. 
The inverse correspondence maps each two-valued  function $y \in\set{0,1}^A$ to its \emph{support}, which is the set 
\[
Y = \{a\in A \mid y(a)=1\}.
\]
The support of the constant zero function is the empty set $\emptyset$, which has no elements, and the support of the constant 1 function is the whole set $A$.

There is partial order $\leq$ on the set $\set{0,1}^A$ defined pointwise: $x\leq y$ if $x(a)\leq y(a)$ for all $a\in A$.
For us, the most important thing to note here is that if $X$ and $Y$ are subsets of $A$, and $x$ and $y$ are their indicator functions, 
then $X\subseteq Y$ if and only if $x\leq y$.

\paragraph{Dung's labeling.}
A necessary and sufficient condition that $X$ be a stable extension (see, e.g.,~\cite{bezem-grabmayer-walicki-12}) is that its indicator function $x\in \{0,1\}^{A}$ be such that
\begin{equation}
\label{eq-dung}
x(a) = 
\begin{cases}
1&\R{if } \,  x(b)=0 \, \R{ for all $b\in \R{par}(a)$,}\\
0&\R{if } \,  x(b)=1 \, \R{ for some $b\in \R{par}(a)$.}
\end{cases}
\end{equation}
Intuitively, $b \in \R{par}(a)$ is labeled with $1$ (i.e., is in $X$), then $a$ must be labeled with $0$ (i.e., is not in $X$).
Hence, there are no arrows in $X$; i.e., $X$ is conflict-free.
At the same time, the only possibility for an argument $a$ be labeled with $0$ is if there is an arrow from $b$ to $a$ and $b$ is labeled with $1$.
Thus, all arguments not in $X$ are attacked by $X$.
Note also that if an argument is not attacked at all (e.g., argument $a$ in Example~\ref{ex-running}), then it is labeled with $1$ by the indicator function.

\paragraph{Two-valued update function.}
The right-hand side of equation~\eqref{stable} can be regarded as a function that maps each set $X$ to  the set $A\setminus R(X)$.
This function was first studied by Neumann-Lara in 1971~\cite{neumann-lara-71} (in the context of kernel theory, where the edges are reversed), 
then by Pollock~\cite{pollock-87,pollock-91,pollock-92}, and more recently by Dung~\cite{dung-95}, Grossi~\cite{grossi-12}, and Strass~\cite{strass-13}. 
Equation~\eqref{stable} denotes the fixed points of this function.
Thus, its fixed points 
are, by definition, the stable extensions. 

It will be more convenient to use the indicator functions of sets of arguments as opposed to the sets themselves.
The \emph{two-valued update function} of $\mathcal{D} = (A,R)$ is the mapping $f_\mathcal{D} : \{0,1\}^A \to \{0,1\}^A$ defined for each $x\in \{0,1\}^A$ and $a\in A$ by 
\begin{equation}
\label{two-update}
f_\mathcal{D}(x)(a) = 
\begin{cases}
	1&\R{if } \,  x(b)=0 \, \R{ for all $b\in \R{par}(a)$,}\\
	0&\R{if } \,  x(b)=1 \, \R{ for some $b\in \R{par}(a)$.}
\end{cases}
\end{equation}

If $p$ and $q$ are elements of the set $\{0,1\}$, we write $p\wedge q$ for the least element of the set $\{p,q\}$, and $p\vee q$ for the greatest element of $\{p,q\}$. 
More generally, if $\{p_i\}_{i\in I}$ is a family of elements $p_i$ of $\{0,1\}$, indexed by the elements $i$ of an arbitrary finite set $I$, the least element of this family is denoted by $\bigwedge_{i \in I} p_i$  and the greatest element by  $\bigvee_{i \in I} p_i$. 
We write $\neg p$ for the complementary value of each $p$ in $\{0,1\}$ (i.e., $\neg p = 1- p$). 
In our context, the operations $\land$, $\lor$ and $\neg$ are called \emph{conjunction}, \emph{disjunction} and \emph{negation}, respectively.
This leads us to another definition of $f_\mathcal{D}$:
\begin{equation}
\label{eq-nor}
f_\mathcal{D}(x)(a) \, = \bigwedge_{b\in \R{par}(a)} \neg x(b) \, = \, \neg\big( \bigvee_{b\in \R{par}(a)} x(b)\big).
\end{equation}

It is possible for a Dung framework to have no stable extensions.
Examples are:
\begin{center}
  \includegraphics[scale=0.8]{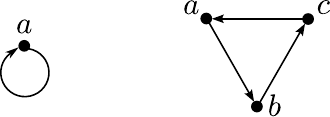}
\end{center}
For instance, in the self-loop example, if we assume that $x(a)=1$, then $x(a)=0$ and vice versa.
This situation is usually considered anomalous in abstract argumentation and is closely connected with logical paradoxes~\cite{cook-04,dyrkolbotn-12,dyrkolbotn-walicki-14}.

\paragraph{Admissibility.}
A first way to rectify this 
situation is to weaken the notion of emission. 
Instead of requiring $X$ to attack all other arguments,
we now require $X$ to attack \emph{all arguments that attack $X$}. 
This was done by Neumann-Lara~\cite{neumann-lara-71}
and Dung~\cite{dung-95}. 
An \emph{admissible set} is a subset of arguments $X$ such that 
$X$ is conflict-free  and all the attackers of each argument in $X$ are attacked by $X$. 
In other words,  $X$ is admissible if it is conflict-free and $\R{par}(a) \subseteq R(X)$ for each $a\in X$.
An example of an admissible set is displayed in Figure~\ref{fig-admissible}.

\begin{figure}[ht]
\begin{center}
\includegraphics[scale=0.6]{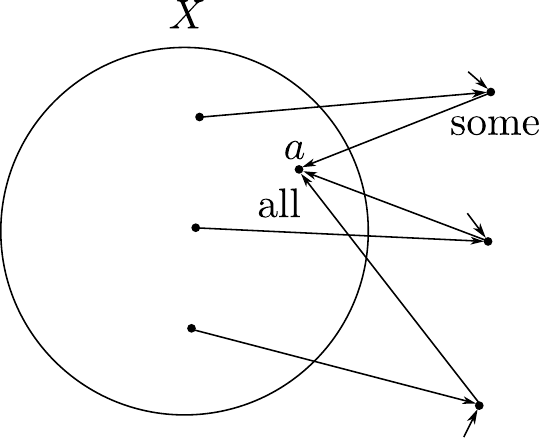}
\end{center}
\caption{\label{fig-admissible} Example of an admissible set. All arguments in $X$ attack all arguments that attack $X$.}
\end{figure}

The fact that two consecutive edges are involved in the notion of admissibility suggests a double application of $f_\mc{D}$~\cite{dung-95}.
We thus apply $f_\mc{D}$ to both sides of~\eqref{two-update}, so that the right-hand side of $f_\mc{D}(f_\mc{D}(x))(a)$ has $f_\mc{D}(x)(b)$ instead of $x(b)$. 
Hence
\[
 f_\mc{D}(f_\mc{D}(x))(a) = 
 \begin{cases}
 	          1&\R{if } \,  f_\mc{D}(x)(b)=0 \, \R{ for all $b\in \R{par}(a)$,}\\
	          0&\R{if } \,   f_\mc{D}(x)(b)=1 \, \R{ for some $b\in \R{par}(a)$.}
 \end{cases}
\]
 ``Unfolding''  this application of $f_D$, we get that $f_\mc{D}(f_\mc{D}(x))(a) = 1$ if and only if  for each $b\in \R{par}(a)$ there exists a $c\in \R{par}(b)$ such that $x(c) = 1$. 
Suppose now that the subsets $X$ and $Y$ are the supports of the functions $x$ and $y = f_\mc{D}(f_\mc{D}(x))$.
 Then $a\in Y$ if and only if $\R{par}(a) \subseteq R(X)$; this implies that  $\R{par}(a) \subseteq R(X)$ for each $a\in Y$.
Therefore, $X$ is an admissible set if and only if $X$ is conflict-free and $X\subseteq Y$.

Let $x$ be the indicator function of a conflict-free subset.
The observations of the previous paragraph imply that a necessary and sufficient condition for the support of $x$ to be  an admissible set is that $x \leq f_\mc{D}(f_\mc{D}(x))$. 
We are interested in the admissible sets whose indicator function are the fixed points of $f_\mc{D}\circ f_\mc{D}$ (i.e., the double application of $f_\mc{D}$).

The \emph{characteristic function} of $\mc{D} = (A, R)$ is the mapping $F_\mc{D} : \{0,1\}^A \to \{0,1\}^A$ defined for each $x\in \{0,1\}^A$ and $a\in A$ by 
\[
F_\mc{D}(x)(a) = (f_\mc{D}\circ f_\mc{D}) (x) (a)= f_\mc{D}(f_\mc{D}(x))(a).
\]
A \emph{complete extension} is a conflict-free set of arguments $X$ whose indicator function $x$ is a fixed point of the characteristic function (i.e., $F_\mc{D}(x)(a) = x(a)$ for all $a\in A$).

As opposed to $f_\mc{D}$, the characteristic function $F_\mc{D}$ has a fixed point for all Dung frameworks.
A special case of a well-known result, often called the Knaster--Tarski theorem~\cite{tarski-55}, asserts that if an operator $G$ from a complete lattice into itself is monotone, then $G$ has a fixed point.
(An operator $G$ is monotone if $X\leq Y$ implies $G(X)\leq G(Y)$;
$G$ is anti-monotone if $X\leq Y$ implies $G(X)\leq G(Y)$.)
It is easy to see that  the set of two-valued functions $\set{0,1}^A$ is a complete lattice, and that $F_\mc{D}$ is a monotone operator on this complete lattice.
Hence, such a fixed point exists.

Another important property of $F_\mc{D}$ is that it preserves conflict-freedom when applied to a conflict-free set of arguments (see, e.g.~\cite[Lemma 18]{dung-95}).

Let $n$ be the number of elements in the finite set $A$.
If  $A = \set{a_1,\dots,a_n}$, then we represent each two-valued function $x\in\set{0,1}^A$ as a vector $x=(x_1,\dots,x_n)\in \set{0,1}^n$ with $x_i = x(a_i)$ for all $i=1,\dots,n$. 
We consider each such vector as the binary expansion of a nonnegative integer less than or equal to $n-1$. 
In this way, we associate each two-valued function with a decimal numeral. 
Figure~\ref{fig-running-state-transition-graph} shows the two-valued update function of Example~\ref{ex-running}.
In decimal numerals, the stable extensions of $\mc{D}$ are $9$ and $10$; the complete extensions are $8$, $9$, and $10$ ($11$ is not because it is not conflict-free).
  
\begin{figure}[ht]
\begin{center}
\includegraphics[scale=0.8]{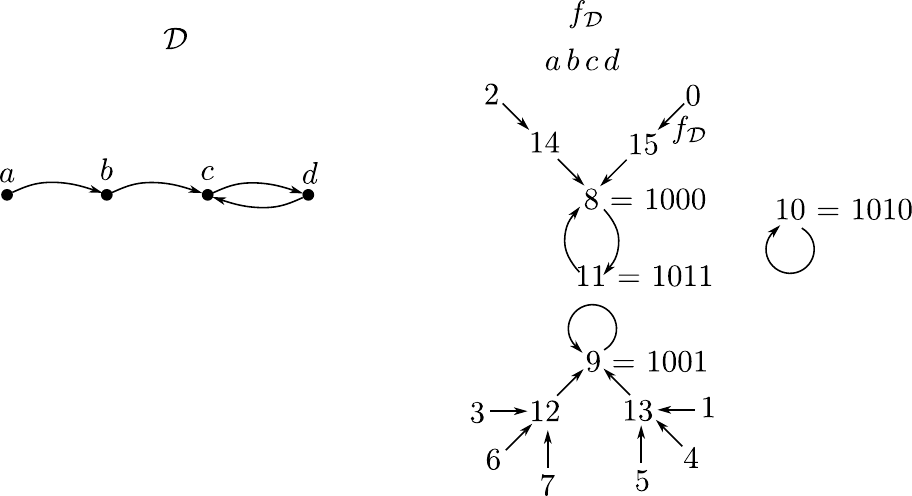}
\end{center}
\caption{\label{fig-running-state-transition-graph} Graphical representation of Example~\ref{ex-running} (left) along with its two-valued update function $f_\mc{D}$ (right), where each arrow of the latter is an application of $f_\mc{D}$.}
\end{figure}

\paragraph{Other extensions.}
Complete extensions always exist, but are not unique.
Therefore, additional properties are sometimes required.
For example, the $\subseteq$-minimal
complete extension of $F_\mc{D}$, called the \emph{grounded extension}, is unique.
The grounded extension can be obtained by repeatedly applying $F_\mc{D}$ starting from the empty set ($0$ in $\set{0,1}^A$), which is vacuously conflict-free.
(The grounded extension of $\mc{D}$ is $8$.)
A disadvantage of the grounded extension is that
the set of accepted arguments is minimal; hence sometimes it is considered too conservative for argumentation problems.
At the other extreme, \emph{preferred extensions} are $\subseteq$-maximal complete extensions (in the sense that no proper superset of the extension is also complete). 
Hence, they are not necessarily unique.
(The preferred extensions of $\mc{D}$ are $9$ and $10$.)
A \emph{semi-stable extension}~\cite{caminada-carnielli-dunne-06,caminada-carnielli-dunne-11}
is defined as 
a complete extension where $X\cup R(X)$ is $\subseteq$-maximal.
An advantage is that they always exist, and if the Dung framework has a stable extension, then the semi-stable extensions coincide with the stable extensions.

Other extensions, that are not necessarily complete, are the \emph{naive}~\cite{bondarenko-dung-kowalski-toni-97} and \emph{stage}~\cite{verheij-96} extensions.
A naive extension is a $\subseteq$-maximal conﬂict-free set. 
A stage extension is a pair $(\I{defeated},\I{undefeated})$ of sets of arguments where any argument in $\I{defeated}$ is attacked by $\I{defeated}$ and $\I{defeated}\cup\I{undefeated}$ is $\subseteq$-maximal.

\paragraph{Caminada's labeling.}
Caminada solved the possible absence of stable extensions in a different way.
A reason why a Dung framework may not have stable extensions is that Dung's labeling may not be well defined.
Caminada added a third value $u$
covering precisely such cases.
Thus, Caminada's approach can be viewed as an attempt to always have stable extensions through three-valued functions.
The result is an alternative definition of complete extensions.

Let $\{0,1,u\}^{A}$ be the set of all functions from $A$ into $\set{0,1,u}$. 
The elements of $\{0,1,u\}^{A}$ will be called \emph{three-valued interpretations} of $\mc{D}$. 
We say that $v\in \{0,1,u\}^{A}$ 
is a \emph{Caminada labeling} of $\mc{D}=(A,R)$ if
\begin{equation}
\label{eq-caminada}
\el(a) = 
\begin{cases}
   1&\R{if } \,  \el(b)=0 \, \R{ for all $b\in \R{par}(a)$,}\\
   0&\R{if } \,  \el(b)=1 \, \R{ for some $b\in \R{par}(a)$,}\\
   u&\R{otherwise}.
\end{cases}
\end{equation}
which is an elaboration of Dung's labeling~\eqref{eq-dung}.
Thus, $\el(a)=u$ precisely when
$\el(b) \neq 0$ for some $b\in \R{par}(a)$ and $\el(b) \neq 1$ for all $b\in \R{par}(a)$. 
Equivalently, of the attackers of $a$, none are labeled with $1$, and at least one is labeled with $u$.

At first sight, Caminada labelings do not seem to be an improvement compared with Dung's, because there is a one-to-one correspondence with Dung's complete extensions.
An advantage, as we will see, is that instead of two applications of the update function, there is only one. 

First, we observe that Caminada's labelings are conflict-free.
\begin{lemma}
\label{lemma-conflict-freedom}
If $\el$ is a Caminada labeling of $\mc{D}$, then the set $\set{a\in A\mid\el(a)=1}$  is conflict-free.
\end{lemma}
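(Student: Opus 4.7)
The plan is to argue by contradiction directly from the defining equation~\eqref{eq-caminada} of a Caminada labeling. Let $X = \{a \in A \mid \el(a) = 1\}$ and suppose, for the sake of contradiction, that $X$ is not conflict-free. Then there exist arguments $a, b \in X$ with $(a,b) \in R$, i.e., $a$ attacks $b$.

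Since $(a,b) \in R$, we have $a \in \R{par}(b)$, and by assumption $\el(a) = 1$. Applying the second clause of the definition~\eqref{eq-caminada} to the argument $b$, we conclude that $\el(b) = 0$: indeed, there exists an attacker of $b$, namely $a$, with $\el(a) = 1$, so the condition triggering the value $0$ is satisfied, and the three clauses of~\eqref{eq-caminada} are mutually exclusive (the clause giving value $1$ requires all attackers to be labeled $0$, and the clause for $u$ requires no attacker to be labeled $1$). This contradicts $b \in X$, which demands $\el(b) = 1$.

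Hence no such pair $a, b$ can exist, and $X$ is conflict-free. The argument is essentially immediate from the definition; the only subtlety worth flagging is to observe that the three cases of~\eqref{eq-caminada} are pairwise incompatible, so that the presence of an attacker labeled $1$ rules out the possibility of $\el(b) = 1$ or $\el(b) = u$. No obstacle beyond this bookkeeping is expected.
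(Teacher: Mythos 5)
Your proof is correct and is exactly the argument the paper intends: the paper states this lemma without proof, but the same reasoning appears verbatim in its informal discussion of Dung's two-valued labeling (an attacker labeled $1$ forces its target to be labeled $0$ by the second clause of~\eqref{eq-caminada}, contradicting membership in $X$). Your remark that the three clauses are mutually exclusive is the right bookkeeping point to flag.
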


Mappings from and to Caminada labelings and complete extensions can be established~\cite{caminada-06a} as follows.
Let $\mc{L} : \set{0,1}^A \to \set{0,1,u}^A$ be the mapping 
defined for each $x\in \set{0,1}^A$ and $a\in A$ by 
\[
  \mc{L}(x)(a) 
  =
  \begin{cases}
  1 &\R{if } \, x(a)=1,\\
  0 & \R{if } \, \R{par}(a) \neq \emptyset,\\
  u &\R{otherwise}.
  \end{cases}
\]

\begin{lemma}[\cite{caminada-06a}]
If $x$ is the indicator function of a complete extension,
then the function $\mc{L}(x)$ in $\set{0,1,u}^A$
is a Caminada labeling of $\mc{D}$.  
\end{lemma}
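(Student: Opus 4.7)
The plan is to verify, for each argument $a \in A$, that the three-valued interpretation $v := \mc{L}(x)$ satisfies the three-clause recursion~\eqref{eq-caminada} defining a Caminada labeling, by a case analysis on $v(a)$. The two ingredients driving the proof are the fixed-point identity $x = F_\mc{D}(x) = f_\mc{D}(f_\mc{D}(x))$, which holds because $X := \set{a \mid x(a) = 1}$ is a complete extension, and the conflict-freedom of $X$. As preparation, I would unpack the two local consequences of the fixed-point identity: $x(a) = 1$ forces each $b \in \R{par}(a)$ to satisfy $f_\mc{D}(x)(b) = 0$, hence to have at least one parent $c$ with $x(c) = 1$; dually, if $x(a) = 0$ and $\R{par}(a) \neq \emptyset$ then some $b \in \R{par}(a)$ satisfies $f_\mc{D}(x)(b) = 1$, hence has no parent in $X$.

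In the case $v(a) = 1$, which by $\mc{L}$'s definition happens exactly when $x(a) = 1$, I would show that every parent $b$ of $a$ satisfies $v(b) = 0$. The preparatory observation forces $\R{par}(b) \neq \emptyset$ (since $b$ must contain an attacker from $X$), and conflict-freedom excludes $b \in X$ (otherwise the edge $b \to a$ would lie inside $X$), so $b$ falls into $\mc{L}$'s second clause and $v(b) = 0$; this matches the first clause of~\eqref{eq-caminada}. In the case $v(a) = 0$, I would combine the definition of $\mc{L}$ with the fixed-point identity to locate a parent $b$ of $a$ lying in $X$, giving $v(b) = 1$ as required by the second clause of~\eqref{eq-caminada}. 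The residual case $v(a) = u$ demands on the one hand that no parent has $v$-value $1$, which follows because any $b \in \R{par}(a) \cap X$ would, via an unfolding symmetric to the previous case, force $v(a) = 0$; and on the other hand that some parent $b^\ast$ has $v(b^\ast) = u$, which I would obtain from the second preparatory observation, noting that since $b^\ast$ has no attacker in $X$, it is neither in $X$ nor in the triggering zone of $\mc{L}$'s second clause, so $v(b^\ast) = u$.

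The main obstacle I expect lies in the last case: producing a parent with $\mc{L}$-value genuinely $u$ requires disentangling both applications of $f_\mc{D}$ inside $F_\mc{D}$ and distinguishing parents that are merely absent from $X$ from those that are also not attacked by $X$. Once this witness has been located, the rest of the argument is routine bookkeeping on the definitions of $\mc{L}$ and $F_\mc{D}$, with conflict-freedom of $X$ ensuring that the three clauses of $\mc{L}$ never give rise to incompatible labels across parents and children.
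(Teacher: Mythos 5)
Your argument is correct in substance, and since the paper gives no proof of this lemma (it is cited from Caminada), there is nothing to diverge from: what you describe is the standard direct verification, namely a case split on $\mc{L}(x)(a)\in\set{1,0,u}$ in which each clause of the labeling recursion~\eqref{eq-caminada} is checked by unfolding the fixed-point identity $x=f_\mc{D}(f_\mc{D}(x))$ and invoking conflict-freedom of $X$. The two preparatory unfoldings you isolate are exactly the right ones, and your handling of the $u$-case --- locating a parent $b^\ast$ with $f_\mc{D}(x)(b^\ast)=1$, hence with no attacker in $X$, hence labeled $u$ --- is indeed the only step that requires disentangling both applications of $f_\mc{D}$.

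One caveat you should make explicit: your proof uses the second clause of $\mc{L}$ in the form ``$\mc{L}(x)(a)=0$ if $a$ is attacked by $X$'', whereas the paper's displayed definition reads ``$0$ if $\R{par}(a)\neq\emptyset$''. Taken literally, the printed clause makes the lemma false. For $A=\set{c,d}$ and $R=\set{(c,c),(c,d)}$, the empty set is a complete extension, yet $\mc{L}(0)$ assigns $0$ to both arguments (each has a nonempty parent set), and the resulting interpretation violates~\eqref{eq-caminada} at $c$, none of whose parents is labeled $1$. Two of your steps depend on the attacked-by-$X$ reading: the claim that $v(a)=0$ yields a parent of $a$ lying in $X$ (under the printed clause, the fixed-point identity only yields an attacker \emph{not attacked by} $X$, which is not what you need), and the claim that your witness $b^\ast$ escapes the second clause because it has no attacker in $X$ (under the printed clause it would be labeled $0$ whenever it has any parent at all). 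So you have silently repaired what is evidently a typo in the paper's definition of $\mc{L}$; with that repair your proof is complete, and without it no proof can succeed.
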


\begin{figure}
\begin{center}
\includegraphics[scale=0.8]{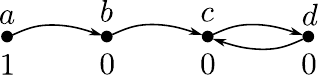}
\end{center}
\begin{align*}
\mc{L}(x)(a) & = 1\\
\mc{L}(x)(b) & = 0\\
\mc{L}(x)(c) & = u\\
\mc{L}(x)(d) & = u
\end{align*}
\caption{\label{fig-caminada} The Caminada labeling resulting from applying $\mc{L}$ to the complete extension named $8$ of Figure~\ref{fig-running-state-transition-graph}.}
\end{figure}

Similarly, let $\mc{M} : \set{0,1,u}^A \to \set{0,1}^A$ be the mapping
defined for each $v\in \set{0,1,u}^A$ and $a\in A$ by
\[
  \mc{M}(v)(a) = 
  \begin{cases}
                    1 & \R{if } \, v(a)=1,\\
                    0 & \R{otherwise}.
\end{cases}                    
\]

\begin{lemma}[\cite{caminada-06a}]
If $\el$ is a Caminada labeling of $\mc{D}$,
then the support of the two-valued function $\mc{M}(\el)$ is a complete extension.
\end{lemma}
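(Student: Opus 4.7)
The plan is to set $x = \mc{M}(\el)$ so that $x$ is the indicator function of $X = \set{a\in A\mid \el(a)=1}$, and then verify the two defining properties of a complete extension: that $X$ is conflict-free and that $x$ is a fixed point of $F_\mc{D}=f_\mc{D}\circ f_\mc{D}$. The conflict-freedom is free of charge: it is exactly the content of Lemma~\ref{lemma-conflict-freedom}. So the whole burden falls on the fixed-point equation $F_\mc{D}(x)(a)=x(a)$ for every $a\in A$, and the natural tactic is a case analysis on the three possible Caminada values of $\el(a)$.

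Before splitting into cases, I would record the unfolded characterization already derived before the definition of $F_\mc{D}$, namely that $F_\mc{D}(x)(a)=1$ if and only if for every $b\in\R{par}(a)$ there exists $c\in\R{par}(b)$ with $x(c)=1$, equivalently with $\el(c)=1$. Then each case is straightforward. If $\el(a)=1$, then by~\eqref{eq-caminada} every $b\in\R{par}(a)$ satisfies $\el(b)=0$, which by~\eqref{eq-caminada} applied to $b$ yields a $c\in\R{par}(b)$ with $\el(c)=1$; thus $F_\mc{D}(x)(a)=1=x(a)$. If $\el(a)=0$, pick some $b\in\R{par}(a)$ with $\el(b)=1$; applying~\eqref{eq-caminada} to $b$ gives $\el(c)=0$ for every $c\in\R{par}(b)$, so $f_\mc{D}(x)(b)=1$ and hence $F_\mc{D}(x)(a)=0=x(a)$.

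The only slightly more delicate case is $\el(a)=u$, where I need to exhibit a witness $b^*\in\R{par}(a)$ with $f_\mc{D}(x)(b^*)=1$ in order to conclude $F_\mc{D}(x)(a)=0=x(a)$. The ``otherwise'' clause of~\eqref{eq-caminada} reads, as the paper already spells out, as: no $b\in\R{par}(a)$ has $\el(b)=1$ and some $b^*\in\R{par}(a)$ has $\el(b^*)=u$. For such a $b^*$, the same Caminada clause applied at $b^*$ forces $\el(c)\neq 1$ for every $c\in\R{par}(b^*)$, i.e.\ $x(c)=0$ for every $c\in\R{par}(b^*)$, whence $f_\mc{D}(x)(b^*)=1$ by~\eqref{two-update}. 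Thus $b^*$ is the required witness.

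I expect the main obstacle to be purely bookkeeping: keeping the direction of the three implications in~\eqref{eq-caminada} straight, and not confusing ``$\el(a)=u$'' with ``some attacker is labeled $u$'', since the labels of the attackers of an $u$-argument can be a mixture of $0$ and $u$ (but never $1$). Once the unfolded form of $F_\mc{D}$ is written down and the $\el(a)=u$ case is handled with the $u$-attacker $b^*$, each case collapses to a one-line appeal to~\eqref{eq-caminada} and~\eqref{two-update}, and combining this with Lemma~\ref{lemma-conflict-freedom} closes the proof.
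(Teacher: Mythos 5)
Your proof is correct. The paper itself does not prove this lemma---it is cited from Caminada's work---but your argument is the standard one and it checks out: conflict-freedom comes from Lemma~\ref{lemma-conflict-freedom}, and the fixed-point identity $F_\mc{D}(\mc{M}(\el))=\mc{M}(\el)$ follows from the three-way case split on $\el(a)$, with the $u$-case correctly handled by picking an attacker $b^*$ labeled $u$ (which exists by the paper's own unpacking of the ``otherwise'' clause) and noting that all of $b^*$'s attackers are mapped to $0$ by $\mc{M}$, so $f_\mc{D}(\mc{M}(\el))(b^*)=1$ and hence $F_\mc{D}(\mc{M}(\el))(a)=0=\mc{M}(\el)(a)$.
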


\paragraph{Three-valued update function.}

We can define a binary operation $\meet$ on the set $\set{0,1,u}$, which can be read as \emph{consensus}, given by $0 \meet 0 = 0$, $1 \meet 1 = 1$, and $u$ otherwise. 
For each $p,q\in \set{0,1,u}$, we write $p\leq_i q$ ($i$ for information) if $p\meet q = p$. 
The relation $\leq_i$ is a partial order on  $\set{0,1,u}$, and with respect to this order we have that  $u<_i0$ and $u<_i1$. 
If $v$ and $w$ are elements of the set $\set{0,1,u}^A$, we write $v \leq_i w$ if $v(a)\leq_i w(a)$ for all $a\in A$. 
It can be shown that $\set{0,1,u}^A$ is a complete meet-semilattice with respect to $\leq_i$. 
Figure~\ref{fig-meet-semilattice} depicts a meet-semilattice of three-valued interpretations.
\begin{figure}[ht]
\begin{center}
   \includegraphics[scale=0.8]{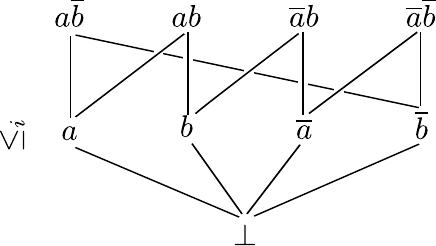}
\end{center}
\caption{\label{fig-meet-semilattice} A meet-semilattice of three-valued interpretations. 
We write $a\overline{b}$ for the three-valued interpretation $v$ on $A=\set{a,b}$ given by  $v(a) =1$ and $v(b)=0$. 
Similarly, $a$ denotes the interpretation $v(a)= 1$ and $v(b) = u$, and $\bot$ the interpretation $v(a)= v(b) = u$.
}
\end{figure}

Just as we obtained the two-valued update function based on Dung's labeling, we can get an update function from Caminada's labeling. 
The \emph{three-valued update function} of $\mathcal{D} = (A,R)$ is the mapping $g_\mathcal{D} : \{0,1,u\}^A \to \{0,1,u\}^A$ defined for each $v\in \{0,1,u\}^A$ and $a\in A$ by 
\begin{equation*}
\label{three-update}
g_\mathcal{D}(v)(a) = 
\begin{cases}
	1&\R{if } \,  v(b)=0 \, \R{ for all $b\in \R{par}(a)$,}\\
	0&\R{if } \,  v(b)=1 \, \R{ for some $b\in \R{par}(a)$,}\\
	u&\R{otherwise.}
\end{cases}
\end{equation*}

It can be proved that $g_\mc{D}$ is monotone with respect to $\leq_i$, and hence 
that $g_\mc{D}$ has a fixed point for any Dung framework $\mc{D}$.

\subsection{Abstract dialectical frameworks}
\label{sect-adf}
Abstract dialectical frameworks were developed~\cite{brewka-woltran-10} by Brewka and Woltran as an extension of Dung's argumentation frameworks~\cite{dung-95}, by allowing arbitrary Boolean functions over the set of parents of an argument,  as opposed to the conjunction of the negation of the parents~\rf{eq-nor}.

Before giving the official definition, we have to make some preliminary observations. 
We say that a Boolean function $h : \set{0,1}^A \ra\set{0,1}$ \emph{depends} on $b \in A$ if 
$h(x) \neq h(y)$ for some $x,y\in\{0,1\}^A$ which only differ in $b$ in the sense that $x(b)\neq y(b)$ and $x(a) = y(a)$ for all $a\neq b$. 
For example, the function $h(x) = (x(a)\land x(b))\lor(\neg x(a)\land x(b))$ depends on $b$ because $h(x)=0$ if $x(a)=0$ and $x(b)=0$, and $h(y)=1$ if $y(a)=0$ and $y(b)=1$.
Recall that the symbols $\land$, $\lor$ and $\neg$ denote the operations of conjunction, disjunction and negation, which are defined on the set $\set{0,1}$ by $p\land q = \min\set{p,q}$, $p\lor q = \max\set{p,q}$ and $\lnot p = 1- p$. 

Each Boolean function can be represented as a propositional formula. 
A propositional formula is an expression built up from variables using the symbols $\land$, $\lor$ and $\neg$,
and using parentheses. 
In our case, the variables shall be elements of the set $A$. 
Every propositional formula can be evaluated by giving an assignment of the values 0 and 1 to the variables. 
We write $x(\phi)$ for the result of the evaluation of the propositional formula $\phi$ with the two-valued function $x\in\set{0,1}^A$.
 For example, if $\phi$ is the propositional formula $(a\land b)\lor(\neg a\land b)$, then $x(\phi) = (x(a)\land x(b))\lor(\neg x(a)\land x(b))$.  
 In this example, it is immediate to see that  the formula $\phi$ represents the Boolean function $h$ given in the previous paragraph in the sense that $h(x) = x(\phi)$ for any $x\in \set{0,1}^A$. 
This representation, however, is not unique. 
For instance,  the function $h$ may well be represented by the formula $(a\land\neg a)\lor b$, and also by the formula $b$.

\paragraph{Abstract dialectical framework.}
An \emph{abstract dialectical framework} (ADF) consists of a finite set $A$ of arguments, a set $R$ of ordered pairs of arguments, and a family of functions $C=\{C_a\}_{a\in A}$ such that $C_a : \set{0,1}^{\R{par}(a)}\ra\set{0,1}$ depends on each $b\in\R{par}(a)$. 
The function $C_a$ is called the \emph{acceptance condition} of $a$.

It is convenient to represent each acceptance condition $C_a$ as a propositional formula $\phi_a$, and so we may think of $C$ as a family propositional formulas $\set{\phi_a}_{a\in A}$.


For each $v\in \set{0,1,u}^A$,  let $\sem{v}$ be the set of all two-valued functions $x\in \set{0,1}^A$ satisfying $x(a)=v(a)$ for all $a\in A$ such that $v(a)\neq u$. 
In other words: 
\[
\sem{v} = \set{x\in\set{0,1}^A \mid v\leq_i x}.
\]
The elements of $\sem{v}$ are called the \emph{completions} of $v$.

If $\mc{D} = (A, R, C)$ is an ADF, the \emph{characteristic operator} of $\mc{D}$ is the mapping $\Gamma_\mc{D} : \{0,1,u\}^A \to \{0,1,u\}^A$ defined for each $x\in \{0,1\}^A$ and $a\in A$ by 
\begin{equation}
\label{eq-characteristic}
  \Gamma_\mc{D}(v)(a) =
\begin{cases}
      1 & \R{if } \, x(\phi_a) = 1 \, \R{ for all $x\in \sem{v}$,} \\
      0 & \R{if } \, x(\phi_a) = 0 \, \R{ for all $x\in \sem{v}$,}\\
      u & \R{otherwise}.
  \end{cases}
\end{equation}

\begin{lemma}[\cite{brewka-woltran-10}] 
The operator $\Gamma_\mc{D}$ is monotone with respect to $\leq_i$.
\end{lemma}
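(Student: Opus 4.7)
The plan is to unpack the definition of $\leq_i$ on $\set{0,1,u}^A$ (which is pointwise) and reduce the claim to a pointwise statement: for every $a\in A$ and every pair $v\leq_i w$ in $\set{0,1,u}^A$, we must show $\Gamma_\mc{D}(v)(a)\leq_i \Gamma_\mc{D}(w)(a)$. Since $p\leq_i q$ on $\set{0,1,u}$ holds iff either $p=u$ or $p=q$, the case $\Gamma_\mc{D}(v)(a)=u$ is immediate, and the entire content of the lemma lies in the two non-trivial cases $\Gamma_\mc{D}(v)(a)\in\set{0,1}$.

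The key auxiliary observation I would establish first is the inclusion of completion sets: \emph{if $v\leq_i w$, then $\sem{w}\subseteq\sem{v}$}. This follows straight from transitivity of $\leq_i$: any $x\in\sem{w}$ satisfies $w\leq_i x$ by definition, so $v\leq_i w\leq_i x$ gives $v\leq_i x$, i.e.\ $x\in\sem{v}$. Intuitively, making the three-valued interpretation more informative (replacing some $u$'s by definite $0$'s or $1$'s) can only shrink the set of two-valued completions.

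Armed with this inclusion, the two non-trivial cases are immediate. Suppose $\Gamma_\mc{D}(v)(a)=1$; by~\eqref{eq-characteristic}, $x(\phi_a)=1$ for every $x\in\sem{v}$. Since $\sem{w}\subseteq\sem{v}$, in particular $x(\phi_a)=1$ for every $x\in\sem{w}$, and therefore $\Gamma_\mc{D}(w)(a)=1=\Gamma_\mc{D}(v)(a)$, which certainly satisfies $\Gamma_\mc{D}(v)(a)\leq_i \Gamma_\mc{D}(w)(a)$. The case $\Gamma_\mc{D}(v)(a)=0$ is symmetric: if every completion of $v$ falsifies $\phi_a$, then so does every completion of $w$, giving $\Gamma_\mc{D}(w)(a)=0$.

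There is no real obstacle here; the lemma is essentially a definitional unfolding. The only point to be careful about is the direction of the inclusion $\sem{w}\subseteq\sem{v}$ (increasing information shrinks the completion set, not enlarges it), because one might naively expect the opposite variance when reading $\Gamma_\mc{D}$ as a ``quantify over completions'' operator. Once this variance is recorded correctly, the universal quantifiers in~\eqref{eq-characteristic} propagate monotonicity automatically.
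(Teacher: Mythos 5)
Your proof is correct and is exactly the standard argument (the paper itself gives no proof, deferring to Brewka and Woltran, whose argument is the same): the antitone inclusion $\sem{w}\subseteq\sem{v}$ for $v\leq_i w$ is the whole content, after which the universal quantifiers in the definition of $\Gamma_\mc{D}$ propagate the $0$ and $1$ cases, and the $u$ case is trivial since $u$ is $\leq_i$-least.
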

This property of $\Gamma_\mc{D}$ allows us to define several semantics~\cite{brewka-ellmauthaler-strass-wallner-woltran-13} for ADF.
The \emph{grounded interpretation} for $\mc{D}$
is the least fixpoint of $\Gamma_\mc{D}$.
A three-valued interpretation $v$ for an ADF $\mc{D}$ is \emph{admissible} if $v\leq_i\Gamma_\mc{D}(v)$.
A three-valued interpretation $v$ for an ADF $\mc{D}$ is \emph{complete} if $\Gamma_\mc{D}(v)=v$ (note the absence of the notion of conflict-freedom).
A three-valued interpretation $v$ for an ADF $\mc{D}$ is \emph{preferred} if $v$ is $\leq_i$-maximal admissible.

\paragraph{ADFs as generalization of Dung frameworks.}
ADFs are a generalization, not of Dung's original formalism, but of Caminada's variant.
Indeed, if $\mc{D}=(A,R)$ is the Dung framework given by $A = \set{a,b}$ and $R=\set{(a,b),(b,a)}$, 
the support of the two-valued function $x\in\set{0,1}^A$ given by $x(a)=x(b)=0$ is a complete extension of $\mc{D}$, but $x$ is not a complete interpretation under the ADF semantics of~\cite{brewka-ellmauthaler-strass-wallner-woltran-13}.

\begin{lemma}
Let $\mc{D} = (A,R,C)$ be an ADF and suppose $\mc{D}$ is also a Dung framework.
Then $\Gamma_\mc{D}=g_\mc{D}$.
\end{lemma}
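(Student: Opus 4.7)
The plan is to exploit the fact that when $\mc{D}=(A,R,C)$ is simultaneously an ADF and a Dung framework, the acceptance conditions are forced: for each $a\in A$, the formula $\phi_a$ must be (equivalent to) $\bigwedge_{b\in\R{par}(a)}\neg b$, as this is exactly the Boolean function read off from equation~\eqref{eq-nor}. Once this shape of $\phi_a$ is fixed, the evaluation $x(\phi_a)$ for any $x\in\set{0,1}^A$ reduces to: $x(\phi_a)=1$ iff $x(b)=0$ for all $b\in\R{par}(a)$, and $x(\phi_a)=0$ iff $x(b)=1$ for some $b\in\R{par}(a)$.

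With this observation in hand, I would fix an arbitrary $v\in\set{0,1,u}^A$ and $a\in A$ and do a case analysis on the three possible values of $g_\mc{D}(v)(a)$, showing in each case that $\Gamma_\mc{D}(v)(a)$ takes the same value. If $g_\mc{D}(v)(a)=1$, then $v(b)=0$ for every $b\in\R{par}(a)$; every completion $x\in\sem{v}$ inherits $x(b)=0$ on parents, so $x(\phi_a)=1$ uniformly and hence $\Gamma_\mc{D}(v)(a)=1$. If $g_\mc{D}(v)(a)=0$, some parent $b$ has $v(b)=1$; then every completion has $x(b)=1$, forcing $x(\phi_a)=0$ uniformly and hence $\Gamma_\mc{D}(v)(a)=0$.

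The only slightly delicate case is $g_\mc{D}(v)(a)=u$, i.e., $v(b)\neq 1$ for all parents $b$ but $v(b)=u$ for at least one such parent. Here I would exhibit two completions of $v$ that make $\phi_a$ take different values: the completion $x_0\in\sem{v}$ obtained by setting every undefined parent value to $0$ satisfies $x_0(b)=0$ for all $b\in\R{par}(a)$ and so $x_0(\phi_a)=1$; on the other hand, picking some parent $b^\ast$ with $v(b^\ast)=u$ and taking the completion $x_1\in\sem{v}$ that sets $x_1(b^\ast)=1$ yields $x_1(\phi_a)=0$. Since $x(\phi_a)$ is neither constantly $1$ nor constantly $0$ on $\sem{v}$, definition~\eqref{eq-characteristic} gives $\Gamma_\mc{D}(v)(a)=u$.

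I do not anticipate any genuine obstacle: the argument is essentially a completeness-of-completions check, and the only subtlety is making sure that in the $u$ case both values are actually realized by some completion, which is handled by the two explicit constructions above. Combining the three cases yields $\Gamma_\mc{D}(v)(a)=g_\mc{D}(v)(a)$ for every $v$ and $a$, i.e., $\Gamma_\mc{D}=g_\mc{D}$.
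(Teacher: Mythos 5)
Your proof is correct and takes essentially the same route as the paper's: read the acceptance condition of each argument as the conjunction of the negations of its parents, then compare $\Gamma_\mc{D}(v)(a)$ with $g_\mc{D}(v)(a)$ case by case, the only nontrivial step being the exhibition of two completions realizing both truth values of $\phi_a$ when no parent is labeled $1$ but some parent is labeled $u$. Your organization by the three possible values of $g_\mc{D}(v)(a)$ is, if anything, a slightly cleaner presentation of the same argument.
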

\begin{proof}
We must instantiate the characteristic operator~\rf{eq-characteristic} to the negation of the disjunction of the parents of each argument (equation~\rf{eq-nor}).
Let $v\in\set{0,1,u}^A$ and $a\in A$.
Clearly, $\Gamma_\mc{D}(v)(a)=g_\mc{D}(x)(a)$ for all $x\in\sem{v}$ if $v(b)\in\set{0,1}$ for all $b\in\R{par}(a)$.
So, we must only consider the cases in which $v(b)=u$.
Recall next that $g_\mc{D}(a)=u$ when, of the parents of $a$, none are labeled with $1$, and at least one is labeled with $u$.
Then $\Gamma_\mc{D}(v)(a)=u$.
The reason is that for some $x\in\sem{v}$, $x(b)=0$
(take $x\in\sem{v}$ such that $x(b)=0$ for all $b\in\R{par}(a)$, in which case $x(\phi_b)=1$)
and for some $x\in\sem{v}$, $x(b)=1$
(in which case $x(\phi_b)=0$).
This coincides with $g_\mc{D}(x)(a)$.
Otherwise, there is a parent of $a$ labeled with $1$ and another one labeled with $u$, in which case $\Gamma_\mc{D}(v)(a)=0$.
This also coincides with $g_\mc{D}(x)(a)$.
\end{proof}

\paragraph{Conflict-freedom.}
Several notions called conflict-freedom (or conflict-freeness) have been incorporated~\cite{brewka-woltran-10,strass-13,strass-wallner-15} in abstract dialectic argumentation. 
The 
simplest one~\cite{linsbichler-maratea-niskanen-wallner-woltran-18,strass-wallner-15} is a weakening of admissibility: a three-valued interpretation $v$ as \emph{conflict-free} if, for all $a\in A$,
$v(a)=p$ implies that there exists $x\in\sem{v}$ such that $x(\phi_a)=p$, for $p\in\set{0,1}$.

\paragraph{Complexity.}
Deciding whether a given three-valued interpretation is admissible is coNP-complete;
deciding whether it is preferred is coNP$^{\R{NP}}$-complete~\cite{strass-wallner-15}.
(We refer the reader to~\cite{papadimitriou-94} for notions of computational complexity.)

\section{Boolean networks}
\label{sect-bool}

\paragraph{Stable states.}
Kauffman introduced~\cite{kauffman-69} synchronous Boolean networks as models of cells, where
variables have only two values, and time is discrete.
Each gene has an associated regulatory function, and all genes are updated simultaneously.
Each state of a cell in which the cell does not change must correspond to a state of the network in which the network does not change either.
Equivalently, each cellular type must correspond to a fixed point, or single-state attractor, of the update function of the network.

Boolean networks are one of myriad gene-network models, which include stochastic models, differential equations, piecewise-linear equations, and discrete models~\cite{de-jong-02}. 

\paragraph{Multiple-state attractors.}
Attractors of size greater than one often represent periodic or oscillatory behavior~\cite{albert-thakar-14}.
There are many examples of oscillatory behaviors in biological systems, including the cell cycle, circadian rhythms, neural oscillations, the production of new organs in plants, and biochemical oscillations~\cite{pavlidis-12}.
Moreover, recent works have suggested that cyclic attractors could also be important for the robustness and adaptability of biological systems (e.g.,~\cite{de-leon-vazquez-jimenez-matadamas-guzman-resendis-antonio-22,ordaz-arias-diaz-alvarez-zuniga-martinez-sanchez-balderas-martinez-22}).
Boolean networks have been capable of reproducing oscillatory behavior (e.g.,~\cite{akman-watterson-parton-binns-millar-ghazal-12,faure-thieffry-09,ge-qian-09}).


\paragraph{Boolean network.}
A \emph{Boolean network} consists of a finite set $A$ whose elements are called \emph{genes}, a set $R$ of ordered pairs of genes, and a family of Boolean functions $C=\{C_a\}_{a\in A}$ such that $C_a : \set{0,1}^{\R{par}(a)}\ra\set{0,1}$ depends on each $b\in\R{par}(a)$. 
The function $C_a$ is called the \emph{regulatory function} of $a$.

As with ADFs, it is common to represent the regulatory functions $\{C_a\}_{a\in A}$ as propositional formulas
$\set{\phi_a}_{a\in A}$.

\paragraph{Interaction graph.}
If $\mc{B}=(A,R,C)$ is a Boolean network, then the pair $(A,R)$ is a directed graph called the \emph{interaction graph} of $\mc{B}$.
The interaction graph is an important notion in Boolean networks.
Starting with Thomas's conjectures~\cite{thomas-81},
properties of the state-transition graph are often inferred from the interaction graph, especially when signed reasoning is incorporated~\cite{aracena-08,mori-mochizuki-17,pauleve-richard-12,remy-ruet-08,remy-ruet-thieffry-08,richard-10,richard-comet-07}.

\paragraph{Asynchronous networks.}
Thomas modified~\cite{thomas-d-ari-90} Kauffman's formalism into a model forbidding update simultaneity, where a single gene is updated at a time.
A state has one successor for each gene that is updated.
Which gene is updated is undetermined.

\paragraph{State-transition graphs.}
We associate a \emph{state-transition graph} $(\set{0,1}^A,\ra)$ with a Boolean network $\mc{B}=(A,R,\set{\phi_a})$ together with an \emph{update discipline} as follows:
\items
{For the \emph{synchronous} update discipline, $x\ra y$ if $y(a)=x(\phi_a)$ for all $a\in A$. 
That is, all arguments are updated simultaneously.
}
{For the \emph{asynchronous} update discipline, $x\asyn y$ if  either $y(a)=x(\phi_a)$ for all $a\in A$ and $y=x$, or               there is an $a\in A$ such that  $y(a)=x(\phi_a)$ and $y(b)=x(b)$ for all  $b\neq a$.
Hence, $y$ differs from $x$ in the value of either zero arguments or exactly one argument.
Sometimes state-transition graphs of asynchronous Boolean networks do not have self-loops (i.e., the first case in the definition is omitted; e.g.,~\cite{chatain-haar-pauleve-18}).
Some authors, however, include such loops, making $\ra$ a total (i.e., serial) relation, e.g.,~\cite{klarner-14}.
} 
A common generalization of synchronous and asynchronous disciplines is often considered.
``Most permissive'' Boolean networks~\cite{pauleve-kolcak-chatain-haar-20} are a more recent development, able to reproduce all behaviors of a corresponding quantitative model.

\paragraph{Trap spaces.}
Klarner devised~\cite{klarner-14} trap spaces,
which are related to ``seeds''~\cite{siebert-11} and ``stable motifs''~\cite{zanudo-albert-13}, as a means to approximate attractors.

An \emph{attractor} with respect to $\ra$ is 
a $\subseteq$-minimal
set of two-valued functions $E\subseteq \set{0,1}^A$ such that for each $x\in E$, if $x\ra y$ then $y\in E$. 
A \emph{trap space} with respect to $\ra$ is a three-valued interpretation $v\in \set{0,1,u}^A$
such that for each any $x\in\sem{v}$, $x\ra y$ implies $y\in\sem{v}$.
In other words, $v$ is a trap space if and only if $\sem{v}$ is an attractor.

\paragraph{Join-semilattice.}
In the Boolean-network literature, the same meet-semilattice as that for ADFs is at times employed~\cite{klarner-14}.
However, the inverse join-semilattice has become a de facto standard in Boolean networks.
Although in principle it is immaterial which of these two semilattices is used, we will now change to the join-semilattice for Boolean networks so as to conform with the literature (we use $\leq_k$, where $k$ is for knowledge).
Thus, a maximal preferred interpretation corresponds to a minimal trap space.
Figure~\ref{fig-join-semilattice} depicts a join-semilattice of three-valued interpretations.
\begin{figure}[ht]
\begin{center}
   \includegraphics[scale=0.8]{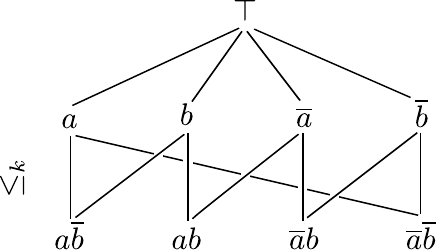}
\end{center}
\caption{\label{fig-join-semilattice} A join-semilattice of three-valued interpretations.}
\end{figure}

\begin{theorem}[\cite{klarner-14}]
\label{thm-adm-trap}
Let $\mc{D}=(A,R,C)$ be an ADF, and let $v\in\set{0,1,u}^A$.
Then $v$ is an admissible interpretation if and only if $\sem{v}$ is a trap space.
\end{theorem}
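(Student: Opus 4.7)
The plan is to reduce each side of the biconditional to one concrete condition, call it $(\star)$: \emph{for every $a\in A$ with $v(a)\in\set{0,1}$ and every $x\in\sem{v}$, $x(\phi_a)=v(a)$.}

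First I would unfold admissibility. Since $\leq_i$ is pointwise on $A$ and its only nontrivial relations on $\set{0,1,u}$ are $u<_i 0$ and $u<_i 1$, the inequality $v\leq_i\Gamma_\mc{D}(v)$ is equivalent to $\Gamma_\mc{D}(v)(a)=v(a)$ for every $a$ with $v(a)\in\set{0,1}$ (the $v(a)=u$ coordinates impose no constraint). Substituting the definition~\eqref{eq-characteristic} of $\Gamma_\mc{D}$ turns this into $(\star)$ verbatim, because $\Gamma_\mc{D}(v)(a)=1$ means exactly that $x(\phi_a)=1$ for all $x\in\sem{v}$, and symmetrically for $0$.

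Next I would unfold the trap-space condition. Given $x\in\sem{v}$, a successor $y$ of $x$ under $\to$ agrees with $x$ on the non-updated coordinates and satisfies $y(a)=x(\phi_a)$ on each updated coordinate $a$. Since $x\in\sem{v}$ already forces $x(b)=v(b)$ whenever $v(b)\neq u$, the requirement $y\in\sem{v}$ reduces to $x(\phi_a)=v(a)$ for each updated $a$ with $v(a)\in\set{0,1}$. Quantifying over all $x\in\sem{v}$ and all successors $y$ gives $(\star)$, regardless of whether $\to$ is synchronous (every coordinate is updated at once) or asynchronous (each coordinate arises as the lone updated coordinate of some successor). This coincidence is precisely what makes the theorem independent of the update discipline.

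Combining the two reductions closes the equivalence. The only subtle step is the second reduction in the asynchronous case: to recover $(\star)$ at \emph{every} constrained coordinate, one must quantify over \emph{all} single-coordinate successors of each $x$, not merely some. This bookkeeping aside, the argument is a pure unfolding of definitions.
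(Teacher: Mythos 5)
Your proof is correct, and it is exactly the definitional unfolding the paper intends: the paper prints no proof of Theorem~\ref{thm-adm-trap} (it cites Klarner), but its remark that the proof ``does not depend on the update discipline'' is precisely your observation that both the synchronous and asynchronous trap-space conditions reduce to the same condition $(\star)$, which is also what $v\leq_i\Gamma_\mc{D}(v)$ unfolds to. No gaps; the asynchronous bookkeeping you flag is handled correctly by quantifying over all single-coordinate successors.
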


The proof of Theorem~\ref{thm-adm-trap} does not depend on the update discipline.
Hence, we have:

\begin{corollary}[\cite{klarner-14,trinh-benhamou-soliman-23}] 
The set of trap spaces of a Boolean network is the same regardless of the update discipline.
\end{corollary}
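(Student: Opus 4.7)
The plan is to derive the corollary directly from Theorem~\ref{thm-adm-trap}, exploiting the observation made just before the corollary that the proof of that theorem does not depend on the update discipline. First I would fix a Boolean network $\mc{B} = (A, R, C)$ and note that $\mc{B}$ is in particular an ADF, so its characteristic operator $\Gamma_\mc{B}$ and the corresponding set of admissible interpretations $\set{v\in\set{0,1,u}^A \mid v \leq_i \Gamma_\mc{B}(v)}$ are defined purely from $(A,R,C)$, with no reference to any relation $\ra$ on $\set{0,1}^A$.

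Next, for each choice of update discipline (synchronous, asynchronous, most permissive, or any other for which Theorem~\ref{thm-adm-trap} remains valid), I would apply that theorem to conclude that $v$ is an admissible interpretation if and only if $\sem{v}$ is a trap space with respect to that discipline. Since the left-hand side of this equivalence is manifestly the same regardless of which discipline was fixed, the right-hand side must coincide across disciplines as well. Because $v\mapsto\sem{v}$ is a bijection between three-valued interpretations and the subsets of $\set{0,1}^A$ of the form $\sem{v}$, the set of trap spaces, viewed either as interpretations or as subsets of states, does not depend on the update discipline.

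As a sanity check, I would verify this directly for the synchronous versus asynchronous case without invoking the theorem. If $v$ is a synchronous trap space and $x\in\sem{v}$, then the synchronous successor $y$ with $y(a)=x(\phi_a)$ lies in $\sem{v}$, so $x(\phi_a)=v(a)$ whenever $v(a)\neq u$; any asynchronous successor of $x$ then also agrees with $v$ on every defined coordinate and hence lies in $\sem{v}$. Conversely, if $v$ is an asynchronous trap space and $x\in\sem{v}$, applying closure to each single-gene update forces $x(\phi_a)=v(a)$ for every $a$ with $v(a)\neq u$, so the full synchronous successor lies in $\sem{v}$.

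The only real obstacle is ensuring that Theorem~\ref{thm-adm-trap} genuinely applies without reference to $\ra$; this is precisely the remark preceding the corollary, so the argument reduces to a clean appeal to the theorem. The direct verification sketched above is a safety net in case one worries that the theorem's statement tacitly fixes a discipline.
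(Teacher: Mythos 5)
Your proposal is correct and follows essentially the same route as the paper: the paper derives the corollary by remarking that the proof of Theorem~\ref{thm-adm-trap} does not depend on the update discipline, so that admissibility (which is discipline-independent) characterizes trap spaces for every discipline simultaneously. Your additional direct synchronous-versus-asynchronous verification is a sound but unnecessary supplement.
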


A minimal trap space contains at least one attractor regardless of the update discipline.
Moreover, any two minimal trap spaces are disjoint.
Hence, the set of minimal trap spaces is a good approximations to the set of attractors~\cite{klarner-streck-siebert-17,trinh-benhamou-pastva-soliman-24}.

Note that minimal trap spaces coincide with preferred interpretations.
\begin{theorem}
Let $\mc{D}$ be an ADF.
The set of preferred interpretations of $\mc{D}$ with respect to $\leq_i$ is equal to the set of minimal trap spaces of $\mc{D}$ with respect to $\leq_k$.
\end{theorem}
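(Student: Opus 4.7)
The plan is to derive this as a direct consequence of Theorem~\ref{thm-adm-trap} together with the fact that $\leq_i$ and $\leq_k$ are opposite orders on $\set{0,1,u}^A$. Concretely, I would organize the proof in three short steps and avoid reproving anything already established.

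First, I would make explicit the duality between the two semilattices used in the paper. Since the join-semilattice $(\set{0,1,u}^A,\leq_k)$ is the order-theoretic inverse of the meet-semilattice $(\set{0,1,u}^A,\leq_i)$, for any subset $S \subseteq \set{0,1,u}^A$ a three-valued interpretation $v\in S$ is $\leq_i$-maximal in $S$ if and only if $v$ is $\leq_k$-minimal in $S$. This is just the observation that reversing the order swaps maximal and minimal elements, and it is what justifies the paragraph in the excerpt stating that ``a maximal preferred interpretation corresponds to a minimal trap space.''

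Second, by Theorem~\ref{thm-adm-trap}, the set $S_{\R{adm}}$ of admissible interpretations of $\mc{D}$ coincides with the set $S_{\R{trap}}$ of trap spaces of $\mc{D}$ (identifying a trap space $v$ with the interpretation whose completions form the attracting set $\sem{v}$). Hence $S_{\R{adm}}=S_{\R{trap}}$ as subsets of $\set{0,1,u}^A$.

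Third, I would unfold the two definitions and combine the previous steps. A preferred interpretation is, by definition, a $\leq_i$-maximal element of $S_{\R{adm}}$, while a minimal trap space is a $\leq_k$-minimal element of $S_{\R{trap}}$. Since $S_{\R{adm}}=S_{\R{trap}}$ and $\leq_i$-maximality in this common set is equivalent to $\leq_k$-minimality, the two sets of distinguished interpretations coincide, which is the claim.

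There is no real obstacle here; the only thing to be careful about is the bookkeeping identification of a trap space with its three-valued interpretation (so that ``minimal'' trap space unambiguously refers to $\leq_k$-minimality among interpretations rather than, say, $\subseteq$-minimality among the sets $\sem{v}$, which amounts to the same thing because $v\leq_k w$ iff $\sem{v}\supseteq\sem{w}$). Once that identification is made, the result is essentially a one-line corollary of Theorem~\ref{thm-adm-trap}.
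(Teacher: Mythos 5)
Your proof is correct and is essentially the argument the paper intends (the theorem is stated there without proof, immediately after Theorem~\ref{thm-adm-trap} and the remark that $\leq_k$ is the inverse of $\leq_i$): combine ``admissible $=$ trap space'' with the fact that reversing the order swaps maximal and minimal elements of the common set. One small slip in your final parenthetical: since $\sem{v}=\set{x\mid v\leq_i x}$ and $\leq_k$ is the inverse of $\leq_i$, the correct equivalence is $v\leq_k w$ iff $\sem{v}\subseteq\sem{w}$ (not $\supseteq$), and it is this direction that makes $\leq_k$-minimality agree with $\subseteq$-minimality of the sets $\sem{v}$, i.e., with the Boolean-network notion of a minimal trap space.
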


\paragraph{Complexity.}
Independently from~\cite{strass-wallner-15}, the same complexity results as for ADFs have been derived for Boolean networks:
Deciding whether a three-valued interpretation is a trap space is coNP-complete, whereas deciding whether it is a minimal trap space is coNP$^{\R{NP}}$-complete~\cite{moon-lee-pauleve-23}.

\section{Computation of admissible interpretations/trap\\spaces}
\label{sect-adm-trap}

Having identified connections between abstract dialectical argumentation and Boolean networks, it is now natural to turn to algorithms for the computation of admissible interpretations and trap spaces.
Both fields have followed parallel developments in that there have been efforts to discover methods for computing these notions in as large as possible instances.
From the point of view of one area, the principles, techniques, and ideas of the other area become available.

A direct approach to computing admissible interpretations/trap spaces given an ADF/BN $\mc{D}$ would compute $\Gamma_\mc{D}$ given a three-valued interpretation $v\in \set{0,1,u}^A$ by evaluating the acceptance conditions/regulatory functions with an exponential number of two-valued functions, namely with each completion $x\in\sem{v}$. 
Shunning this exponential blow-up is of paramount importance.

At first sight, it might appear that a three-valued logic could be used to avoid this blow-up, such as Kleene's strong logic~\cite{kleene-52}.
Nevertheless,~\cite{baumann-heinrich-20} showed that no truth-functional three-valued logic covers $\Gamma_\mc{D}$.
(A logic is truth-functional if the evaluation of
a composed formula only depends of the truth values of its constituting subformulas.)

We now mention some techniques used in tools for computing admissible interpretations.
A SAT solver~\cite{biere-heule-van-maaren-walsh-09} is a computer program that attempts to solve the SAT problem, i.e., given a propositional formula in conjunctive normal form, decide whether or not there is a variable assignment making the formula true.

In Answer Set Programming (ASP)~\cite{brewka-eiter-truszczynski-11,eiter-ianni-krennwallner-09,gebser-kaminski-kaufmann-schaub-22,lifschitz-19}, a (disjunctive) ``program'' consists of a
set of rules of the form:
\[
A_1\lor\cdots\lor A_\ell \la B_1\land\cdots\land B_m\land\R{not}\, C_1\land\cdots\land\R{not}\, C_n 
\]
where we write $\phi\la \psi$ for the propositional formula $\phi\lor\neg \psi$, and $\R{not}\, \phi$ can be interpreted as negation as failure. 
Essentially, a rule means that any of $A_i$ is true if all of $B_j$ are true and none of $C_k$ can be proved to be true.
An ASP solver computes ``stable models'' or ``answer sets'', which are solutions to the program.

An ordered, reduced, binary-decision diagram (BDD)~\cite{bryant-92,bryant-86,huth-ryan-04,meinel-theobald-1998,wegener-00} is an often concise data structure representing a Boolean function.
It is a directed acyclic graph with each internal node having exactly two children and with two leaves.
The root corresponds to the function,
each internal node to a ``subfunction'' (i.e., a cofactor),
and the leaves respectively to 0 and 1.
It can also be seen as a representation of a set
or of a finite-state automaton.

Petri nets~\cite{peterson-77} are a graphical model of information flow with two types of nodes: ``places'' and ``transitions''.
Places are connected by arrows to transitions, and transitions to places.
The flow of information is controlled by the movement of ``tokens'', which cause a transition to fire when all of its input places have a token.
A connection between Petri nets and Boolean networks was first established in~\cite{chaouiya-remy-ruet-thieffry-04}.

\subsection{Computation of admissible interpretations in ADFs}
Here we skim through prominent ADF tools, but slow down for \texttt{k++ADF}.

The \texttt{DIAMOND} systems~\cite{ellmauthaler-strass-13,ellmauthaler-strass-14,strass-ellmauthaler-17} use ASP.
This family of tools a priori computes truth tables of the acceptance conditions\footnote{See \url{https://sourceforge.net/projects/diamond-adf/}, file \texttt{transform.pl}.}
Admissible interpretations are found by repetitively generating interpretations and checking whether or not they satisfy the definition of such interpretations. 
The grounded model is directly computed by repetitively applying the characteristic operator.
Preferred interpretations are computed with the \texttt{\#minimize} directive.

\texttt{QADF}~\cite{diller-wallner-woltran-15} represents the different problems occurring in ADFs as quantified Boolean formulas (an extension of Boolean logic with quantification over propositional variables) and uses an off-the-shelf solver 
for quantified Boolean formulas.
Each problem corresponds to a single call to the solver. 

\texttt{YADF}~\cite{brewka-diller-heissenberger-linsbichler-woltran-20} uses an approach called dynamic (where the \texttt{Y} comes from), because the encodings are generated individually for every instance.

\texttt{ADF-BDD}~\cite{ellmauthaler-gaggl-rusovac-wallner-22} employs BDDs as encondings to exploit the fact that many otherwise hard decision problems have polynomial time complexity~\cite{darwiche-marquis-02} when so encoded.
(At the time of writing this contribution, however, the computation of preferred interpretations with this tool had not been described~\cite{ellmauthaler-gaggl-rusovac-wallner-22} or implemented~\cite{ellmauthaler-23}.)

\texttt{k++ADF}~\cite{linsbichler-maratea-niskanen-wallner-woltran-18}, based on a SAT solver, exhibited the best performance in~\cite{linsbichler-maratea-niskanen-wallner-woltran-18} for many instances, compared with the other tools we have summed up.
The SAT solver must be such that in case of satisfiability, it returns an assignment satisfying the formula.
(\texttt{k++ADF} is written in \texttt{C++}, which might give it its name.)

At the heart of \texttt{k++ADF} there is an encoding of admissibility.
It happens that the negation of the definition of admissibility is in conjunctive normal form, and hence amenable to be given to a SAT solver.

A three-valued interpretation $v\in\set{0,1,u}^A$ is admissible if and only if the formula
\[
\Phi = \bigwedge_{v(a)=1} a\ \land \bigwedge_{v(a)=0} \neg a\ \to \bigwedge_{v(a)=1} \phi_a\  \land \bigwedge_{v(a)=0} \neg \phi_a
\]
is a tautology. 
Equivalently,  $v$ is admissible if $\neg\Phi$ is not satisfiable.
Hence, $\neg\Phi$ is used to discard candidate interpretations that are not admissible.

We summarize the computation of preferred interpretations.
Essentially, \texttt{k++ADF} constructs a conjunction of subformulas starting from an encoding of conflict-freedom (in the sense of Subsect.~\ref{sect-adf}), hence a priori discarding all non-conflict-free interpretations.
The formula $\Phi$ is used to test whether or not a candidate interpretation is admissible.
Once a candidate is rejected for not being admissible, a subformula encoding all the strictly greater interpretations than a given interpretation is used to generate another candidate (which next candidate is generated depends on the solver).
The last of such interpretations is preferred.
The preferred interpretation so obtained is blocked by adding its negation to the formula given to the solver so as to generate a different preferred interpretation.

\subsection{Computation of  trap spaces in Boolean networks}

We mention some important Boolean-network tools, and go deeper into \texttt{mpbn} and \texttt{tsconj}.

\texttt{PyBoolNet}~\cite{klarner-14,klarner-bockmayr-siebert-15,klarner-streck-siebert-17}
is an elaboration of ``forcing'' mappings, introduced by Kauffman~\cite{kauffman-71,kauffman-72} and further developed by Fogelman-Soulié~\cite{fogelman-soulie-85}.
This algorithm comprises two stages.
The first one
computes the prime implicants~\cite{quine-52,strzemecki-92} of each regulatory function as well as that of the negation of each regulatory function.
The second stage (which may or may not include minimization or maximization) consists in finding subsets of prime implicants that satisfy two conditions: ``consistency'' and ``stability''.
Each such subset corresponds to a trap space.
For the first stage, Klarner uses a \texttt{espresso} and for the second stage Klarner has experimented with two techniques: ASP and integer linear programming; in~\cite{klarner-bockmayr-siebert-15} ASP exhibits a better performance.

\texttt{BioLQM}~\cite{naldi-18} ``proposes an adapted version of the method implemented in \texttt{PyBoolNet} using the clingo ASP solver~\cite{gebser-kaufmann-kaminski-ostrowski-schaub-schneider-11}, and introduces a new alternative implementation based on decision diagrams''.

\texttt{trappist}~\cite{trinh-benhamou-soliman-23,trinh-hiraishi-benhamou-22} is based on siphons in Petri nets.
Intuitively, a siphon is a set of places that once unmarked remains so.

Both \texttt{mpbn}~\cite{chatain-haar-pauleve-18,pauleve-kolcak-chatain-haar-20,trinh-benhamou-pauleve-24} and \texttt{tsconj}~\cite{trinh-benhamou-pastva-soliman-24} employ ASP.
They can be viewed as based on a cover of Boolean networks that uses disjunctive normal forms of both regulatory functions and their negations.

This cover is as follows. 
With each argument $a\in A$, we associate two Boolean variables $p_a$ (for positive) and $n_a$ (for negative).
We then map an expression in disjunctive normal form to an expression over two-valued variables with:
\begin{align*}
  \sigma(0) = & \ 0 \\
  \sigma(1) = & \ 1 \\
  \sigma(a) = & \ p_a\\
  \sigma(\neg a) = & \ n_a\\
  \sigma(\phi\land\psi) = & \ \sigma(\phi)\land\sigma(\psi)\\
  \sigma(\phi\lor\psi) = & \ \sigma(\phi)\lor\sigma(\psi)
\end{align*}
We do likewise to three-valued interpretations defining the map $\sigma' : \set{0,1,u}^A \ra \set{0,1}^{A'}$, where $A'=\set{p_a\mid a\in A}\cup\set{n_a\mid a\in A}$, as follows:
\begin{align*}
  \sigma'(v)(0) = 0 &\\
  \sigma'(v)(1) = 1 &\\
  \sigma'(v)(p_a) = &\ 
    \left\{ \begin{array}{l}
              0 \R{ if } v(a)=0,\\
              1 \R{ if } v(a)\in\set{1,u}
            \end{array}
    \right.\\
  \sigma'(v)(n_a) = &\ 
    \left\{ \begin{array}{l}
              0 \R{ if } v(a)=1,\\
              1 \R{ if } v(a)\in\set{0,u}
            \end{array}
    \right.\\
  \sigma'(\phi\land\psi) = & \ \sigma'(\phi)\land\sigma'(\psi)\\
  \sigma'(\phi\lor\psi) = & \ \sigma'(\phi)\lor\sigma'(\psi)
\end{align*}
Finally, we need a consensus function: 
\begin{align*}
  \tau(0,1) = & \ 0\\
  \tau(1,0) = & \ 1\\
  \tau(1,1) = & \ u
\end{align*}
With a slight abuse of notation, we will use $\R{DNF}(\phi)$ for any disjunctive normal form of $\phi$ where no disjunct subsumes another and no disjunct has both a positive and a negative literal of the same variable.
It can be proved that the following equality holds:
\begin{equation*}
\label{eq-tsconj}
  \tau(\ \sigma'(v)\sigma(\R{DNF}(\phi_a)),
       \ \sigma'(v)\sigma(\R{DNF}(\neg\phi_a))\ )
  = \Gamma_\mc{D}(v)(a)
\end{equation*}
which allows us to circumvent the computation of $\phi_a$ on all completions $\sem{v}$ of $v$.

\begin{example} Given the pair $(A,R)$ of Example~\ref{ex-running}, associate with the argument $c\in A$ the propositional formula 
\[
  \phi_c = \neg b\land\neg d. 
\]
Take a three-valued interpretation $v\in\set{0,1,u}^A$ such that $v(b)= 1$ and $v(d) = u$.
Then there exist  $w_1,w_2\in \sem{v}$ such that, for instance, $w_1(b) = 1$ and $w_1(d)= 0$, and $w_2(b) =1$ and $w_2(d) = 1$.
Now, $\Gamma(w_1)(\phi_c) = \Gamma(w_2)(\phi_c) = 0$, so that $\Gamma(v)(\phi_c)=0$.

On the other hand,
\begin{align*}
\R{DNF}(\phi_c) & = \neg b\land\neg d\\
\R{DNF}(\neg\phi_c) & = (\neg b\land d)\lor b
\end{align*}                        
\[
\sigma'(v)(p_b) = 1, \quad \sigma'(v)(n_b) = 0,\quad \sigma'(v)(p_d)= 1,\quad \sigma'(v) (n_d) = 1
\]
\begin{align*}
\sigma(\R{DNF}(\phi_c)) & = \sigma(\neg b\land\neg d) = n_b\land n_d\\
\sigma(\R{DNF}(\neg \phi_c)) & = \sigma((\neg b\land d)\lor b) = (n_b\land p_d)\lor p_b
\end{align*}
\begin{align*}
\sigma'(v)\sigma(\R{DNF}(\phi_c)) & = 0,\\
\sigma'(v)\sigma(\R{DNF}(\neg\phi_c)) & = 1.
\end{align*}
Hence,
\[ \tau(\ \sigma'(v)\sigma(\R{DNF}(\phi_c)),\ \sigma'(v)\sigma(\R{DNF}(\neg\phi_c))\ ) = 0.
\]
\end{example}

\texttt{mpbn} represents what we and \texttt{tsconj} have named $p_a$ and $n_a$ with pairs $(a,\texttt{1})$ and $(a,\texttt{-1})$, respectively.
Neither negation or disjunction occur in the encondings of \texttt{mpbn}.
For signed networks (to be treated in Sect.~\ref{sect-signed}) \texttt{mpbn} does use DNF.
(The fact that a network is signed allows the disposal of the DNF of the negation of the regulatory functions, as we will see in Sect.~\ref{sect-signed}.)
However, for arbitrary Boolean networks it uses BDDs instead.
There is a close connection, nevertheless, between BDDs and DNFs: a path from the root to the leaf labeled with $1$ in a BDD corresponds to a conjunction of literals occurring in a DNF of the function.
Similarly, a path from the root to the leaf labeled with $0$ corresponds to a conjunction of literals occurring in a DNF of the negation of the function.
Unlike \texttt{tsconj}, \texttt{mpbn} does not perform any optimization on the representation of the regulatory functions.

\texttt{tsconj} does not use negation either, but it
does employ disjunction on the left-hand side of rules:
\[
  p_a\lor n_a\la
\]
to exclude the possibility of both $p_a$ and $n_a$ being false.
In addition, for each regulatory function $\phi_a$, the encoding represents the formula $a\leftrightarrow\phi_a$ by first rewriting it as two rules:
\begin{align*}
\sigma(a) & \la \sigma(\R{DNF}(\phi_a))\\
\sigma(\neg a) & \la \sigma(\R{DNF}(\neg\phi_a))
\end{align*}
splitting the biimplication into the implication to the left and the contrapositive of the implication to the right.
Next, each disjunct is translated into a different rule by adding auxiliary variables.

ASP solvers compute \emph{minimal} solutions, in the sense of set containment.
Hence, with this encoding, the ASP solver computes minimal trap spaces.

\texttt{tsconj} has a number of optimizations to this encoding.
First, there are cases where $\phi$ can be treated verbatim, without having to compute a DNF (when a function satisfies a sufficient condition called ``safeness'').
Next, there may be repeated subformulas in the encoding that can be factorized and some auxiliary variables that can be bypassed.

\paragraph{Attractors.}
In gene regulation, we are, in general, ultimately interested not in minimal trap spaces per se, but in attractors instead.
\texttt{PyBoolNet}~\cite{klarner-siebert-15} uses model checking and \texttt{mtsNFVS}~\cite{trinh-hiraishi-benhamou-22} first reduces the state-transition graph, and then uses reachability analysis.
We refer the reader to~\cite{trinh-hiraishi-benhamou-22} and references therein for methods for finding attractors from minimal trap spaces.

\subsection{Benchmarks}
We tested the performance of \texttt{k++ADF} and \texttt{tsconj} on a number of ADFs/BNs.

We used:
\items
{three databases from \texttt{YADF}\footnote{\url{https://www.dbai.tuwien.ac.at/proj/adf/yadf/}}:
  \texttt{ABA2AF}, \texttt{Planning2AF}, and \texttt{Traffic}, with 100 instances each database and with between 10 and 300 arguments each instance~\cite{linsbichler-maratea-niskanen-wallner-woltran-18}; and
}
{two databases used also by \texttt{tsconj}\footnote{\url{https://zenodo.org/records/10406324}}:
(a) the Biodivine Boolean Models (BBM) dataset~\cite{pastva-safranek-benes-brim-henzinger-23}, with 212 real-world Boolean networks with up to 321 variables and
(b) 39 real-world models, selected by the authors of~\cite{trinh-benhamou-pastva-soliman-24}, which do not appear in BBM, with up to 3,158 variables.
}

We only registered instances for which both tools were able to compute all preferred interpretations/minimal trap spaces with the allotted resources.
The reason is that we wished to compare that both tools computed the same answers for each instance (which was the case).
This termination criterion was only met by 541 instances; in particular, the some instances did not (i.e., only one of the two tools terminated within 64 Gb of RAM and 64 minutes).
Figure~\ref{fig-k++adf-vs-tsconj} shows the results.\footnote{We used a core belonging to a cluster with 13 nodes on two processors Intel\textregistered\ Xeon\textregistered\ Gold 5218R CPU @ 2.10 GHz (40 cores per node) and 256 Gb of RAM, seven nodes with two processors Intel Xeon CPU E5-2690 v3 @ 2.60 GHz (24 cores per node) and 128 Gb of RAM.}

\begin{figure}[ht]
\begin{center}
    \includegraphics[scale=0.8]{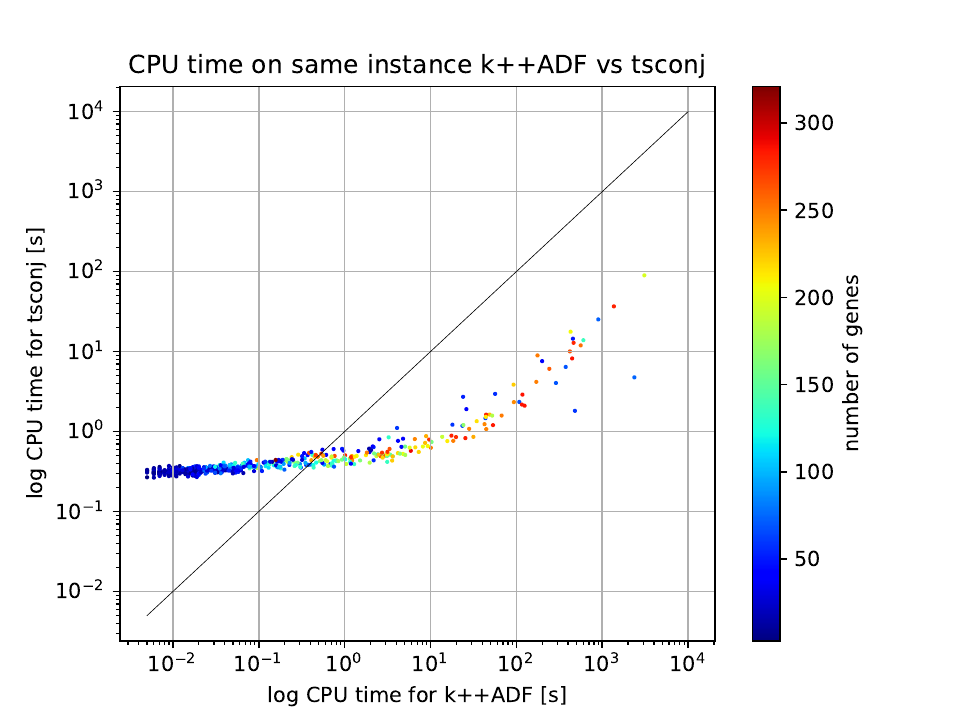}
\end{center}
\caption{\label{fig-k++adf-vs-tsconj} \texttt{k++ADF} vs \texttt{tsconj}.}
\end{figure}

\section{Signed reasoning}
\label{sect-signed}

\subsection{Bipolarity in argumentation}
The additional expressive power of ADFs compared with Dung frameworks comes at a price (ADFs are one level up in the ``polynomial hierarchy''~\cite{papadimitriou-94} than Dung frameworks).
However, ``bipolar'' ADFs (or BADFs), a restriction of ADFs, have the same computational complexity for many problems as Dung frameworks, while being strictly more expressive~\cite{baumann-heinrich-23,strass-wallner-15}.

(When developing the stable, admissible, and preferred semantics for ADFs, Brewka and Woltran needed~\cite{brewka-woltran-10} an operator to have a least fixed point.
Hence, these authors made it a requirement for the acceptance conditions to be monotone,
called ``supporting'' acceptance conditions;
anti-monotone acceptance conditions, called ``attacking'' acceptance conditions,
were also incorporated without affecting the semantics.
Hence, in~\cite{brewka-woltran-10} these semantics are only defined for BADFs;
subsequent work~\cite{brewka-ellmauthaler-strass-wallner-woltran-13,strass-13} defined these notions for arbitrary ADFs.)

Similarly constrained frameworks have been independently developed within abstract argumentation~\cite{amgoud-cayrol-lagasquie-schiex-08,cayrol-lagasquie-schiex-13,cohen-gottifredi-garcia-simari-14,nouioua-risch-10,polberg-oren-14}.

For each two-valued function $x\in\set{0,1}^A$, each argument $a\in A$ and each value $p\in\set{0,1}$, we define the two-valued function $x^{a\mt p} \in \set{0,1}^A$ given by  $x^{a\mt p}(a) = p$ and $x^{a\mt p}(b)=x(b)$ for all $b\neq a$.
An acceptance condition $\phi_b$ is \emph{monotone} on argument $a$, or the ordered pair $(a,b)$ is \emph{supporting}, which we write as $aM^+b$, if $\phi_b$ depends on $a$ and for each $p,q\in\set{0,1}$,
\begin{equation}
\label{eq-monot}
 p\leq q \quad\R{ implies }\quad  x^{a\mt p}(\phi_b) \leq x^{a\mt q}(\phi_b) \, \R{ for all }\, x\in \set{0,1}^A.
\end{equation}
An acceptance condition $\phi_b$ is \emph{anti-monotone} on argument $a$, or the ordered pair $(a,b)$ is \emph{attacking}, which we write as $aM^-b$, if $\phi_b$ depends on $a$ and for each $p,q\in\set{0,1}$,
\begin{equation}
\label{eq-anti-monot}
  p\leq q \quad\R{ implies }\quad  x^{a\mt p}(\phi_b) \geq x^{a\mt q}(\phi_b) \, \R{ for all }\, x\in \set{0,1}^A.
\end{equation}
The \emph{attacking} relation $M^-$ is a generalization of Dung's \emph{attack} relation $R$.
To see this, observe first that we can view an arbitrary acceptance condition $\phi_b$ (which is a function) as having four contingencies:
\begin{align}
x^{a\mt 0}(\phi_b)=0 &\ \R{ and }\ x^{a\mt 1}(\phi_b)=0 \label{cont-00} \\
x^{a\mt 0}(\phi_b)=0 &\ \R{ and }\ x^{a\mt 1}(\phi_b)=1 \label{cont-01} \\
x^{a\mt 0}(\phi_b)=1 &\ \R{ and }\ x^{a\mt 1}(\phi_b)=0 \label{cont-10} \\
x^{a\mt 0}(\phi_b)=1 &\ \R{ and }\ x^{a\mt 1}(\phi_b)=1 \label{cont-11}
\end{align}
Next, note that only contingency \rf{cont-01} is excluded in \emph{attacking}.
Now, recall that in Dung's \emph{attack} relation (where the acceptance conditions are the conjunction of the negation of the parents, i.e., $x(\phi_b)=1$ only when $a=0$ for all $a\in\R{par}(b)$), there are only two contingencies:
\rf{cont-00} and \rf{cont-10},
both of which occur in \emph{attacking}.
We conclude that \emph{attacking} generalizes \emph{attack}.

\subsection{Signed interaction in Boolean networks}
Independently and before the advent of abstract argumentation, signed reasoning has been used in gene regulation modeling~\cite{kauffman-71,kauffman-74,thomas-d-ari-90}.
Unlike in argumentation, we find two notions of signed reasoning in the literature of Boolean gene regulation: one based on monotonicity, as in argumentation, but another one based on the discretization of the time derivative.
We will see that these two notions are equivalent to each other,
provided that we rule out ``dual'' derivative regulations, that is, provided that we take strict derivative-based interactions (it is not necessary to take strict monotonicity-based interactions).

\paragraph{Monotonicity-based interaction.}
The monotonicity-based notion dates back to Waddington's canalization~\cite{waddington-42}.
As an abstraction of the Michaelis--Menten equation~\cite{deichmann-schuster-mazat-cornish-bowden-14,michaelis-menten-13}, which is monotonic, Kauffman reinterpreted~\cite{kauffman-71,kauffman-74}, for Boolean functions, Waddington's concept (initially called ``forcible'') as follows~\cite{li-adeyeye-murrugarra-aguilar-laubenbacher-13}:
The acceptance condition $\phi_a$ is \emph{canalizing} on $b\in A$ if $x^{b\mt p}(\phi_a)=q$ for $p,q\in\set{0,1}$.
There are four cases in this definition (two for $p$ and two for $q$), which can easily be seen equivalent to monotonicity (two cases) and anti-monotonicity (two cases).
Experimental evidence confirms that in Nature there is a strong bias towards monotonic regulation~\cite{harris-sawhill-wuensche-kauffman-02,subbaroyan-margin-samal-22}.
Literature examples of monotonic signed regulation are:~\cite{akutsu-melkman-tamura-yamamoto-11,aracena-08,aracena-richard-salinas-17,benevs-brim-kadlecaj-pastva-safranek-20,li-adeyeye-murrugarra-aguilar-laubenbacher-13,moon-lee-pauleve-23,mori-mochizuki-17,veliz-cuba-11}.

A monotone (resp.\ anti-monotone) function can also be characterized as one in which only positive (resp.\ negative) literals occur~\cite{balogh-dong-lidicky-mani-zhao-23}.
A simplification in \texttt{mpbn} for monotone or anti-monotone regulatory functions is that~\rf{eq-tsconj} reduces to either the positive case or the negative case of $\phi_a$.

\paragraph{Derivative-based interaction.}
Instead of monotonicity-based interactions, many authors use a \emph{derivative-based} notion of signed regulation.
The ubiquitous use of differential equations as models of gene regulation justifies this approach.
The Boolean derivative $\R{D}_a\phi_b \colon \set{0,1}^{\R{par}(b)}\ra\set{-1,0,1}$~\cite[p 47]{o-donell-14} of $\phi_b$ with respect to $a\in \R{par}(b)$ is:
\[
  \R{D}_ax(\phi_b) = x^{a\mt 1}(\phi_b) - x^{a\mt 0}(\phi_b)
\]
which is normally existentially quantified, resulting in a signed interaction graph~\cite{aracena-cabrera-crot-salinas-21,comet-et-al-13,munoz-carrillo-azpeitia-rosenblueth-18,naldi-thieffry-chaouiya-07,richard-19,richard-rossignol-comet-bernot-guespin-michel-merieau-12,richard-ruet-13} as follows.

Given an acceptance condition $\phi_b$, the ordered pair $(a,b)$ is an \emph{activation}, which we write as $aD^+b$, if $\phi_b$ depends on $a$ and for each $p,q\in\set{0,1}$,
\begin{equation}
\label{eq-pos-deriv}
   p< q \quad\R{and}\quad  x^{a\mt p}(\phi_b) < x^{a\mt q}(\phi_b) \, \R{ for some }\, x\in \set{0,1}^A.
\end{equation}
Given an acceptance condition $\phi_b$, the ordered pair $(a,b)$ is an \emph{inhibition}, which we write as $aD^-b$, if $\phi_b$ depends on $a$ and for each $p,q\in\set{0,1}$,
\begin{equation}
\label{eq-neg-deriv}
     p< q \quad\R{and}\quad  x^{a\mt p}(\phi_b) > x^{a\mt q}(\phi_b) \, \R{ for some }\, x\in \set{0,1}^A.
\end{equation}

\paragraph{Relationship between monotonicity-based and derivative-based interactions.}
First, observe that monotonicity-based interaction implies derivative-based interaction:
\begin{align*}
  aM^-b &\ \R{ implies }\ aD^-b,\\
  aM^+b &\ \R{ implies }\ aD^+b .
\end{align*}
To see this, recall that $M^-$ excludes contingency~\rf{cont-01}.
However, by definition of Boolean network, the acceptance condition of an argument depends on all the parents of that argument.
Hence, contingency~\rf{cont-10} must occur in $M^-$, which is the existentially quantified contingency of $D^-$.
Consequently, $aM^-b \R{ implies } aD^-b$.

Next, note that the positive derivative $D^+$ is  equivalent to the negation of anti-monotonicity $M^-$:
\[
aD^+b \quad \R{ if and only if }\quad \R{ not }\, aM^-b.
\] 
To see this, assume that $\phi_b$ depends on $a$, and observe that  
$\phi_b$ is not anti-monotone on $a$ (i.e., not $aM^-b$)
if and only if   $p\leq q$  and $x^{a\mt p}(\phi_b) < x^{a\mt q}(\phi_b)$  for some $x\in \set{0,1}^A$. 
Since $\phi_b$ depends on $a$, we have that $x^{a\mt p}(\phi_b) \neq x^{a\mt q}(\phi_b)$ if and only if $p\neq q$. 
Hence,  not $aM^-b$ is equivalent to  $p< q$  and $x^{a\mt p}(\phi_b) < x^{a\mt q}(\phi_b)$  for some $x\in \set{0,1}^A$, which in turn is equivalent to $aD^+b$. 
Similarly, the negative derivative $D^-$ is  equivalent to the negation of monotonicity $M^+$:
\[
aD^-b \quad \R{ if and only if }\quad \R{ not }\, aM^+b.
\]

The fact that we can transit from monotonicity to derivation through negation suggests considering \emph{strict} interactions (used, for instance, in~\cite{baumann-strass-17}) so that strict monotonicity is  equivalent to the strictly positive derivative.
We thus get:
\begin{equation}
  aM^+b\, \R{ and  not }\, aM^-b \quad \R{ if and only if }\quad \R{ not }\, aD^-b \, \R{ and }\, aD^+b \label{eq-strict}
\end{equation}
which reconciles monotonicity-based and derivation-based interactions: strict monotonicity is  equivalent to the strict positive derivative.
Observe, nevertheless, that by definition of Boolean network, the acceptance condition of an argument depends on all the parents of that argument.
Hence, contingency~\rf{cont-01}, which is existentially quantified in $D^+ ( = \neg M^-)$, does occur in $M^+$.
Therefore, to obtain strict monotonicity $M^+$ we do not have to take the conjunction with the negation of anti-monotonicity $\neg M^-$.
By contrast, to obtain the strict positive derivative $D^+$ we do have to take the conjunction with the negation of the negative derivative $\neg D^-$.

\paragraph{Conjunction of positive and negative interactions.}
Having combined positive and negative interactions, as in~\rf{eq-strict}, invites exploring other Boolean combinations.
In the case of monotonicity, an interaction satisfying $M^+\land M^-$ is 
``irrelevant'',
consisting only of contingencies~\rf{cont-00} and~\rf{cont-11}.
Such interactions are not interesting, as they are excluded by definition.
Perhaps that is a reason why other Boolean combinations of interactions have been neglected.

Taking the conjunction of positive and negative derivatives is different.
In an interaction satisfying $aD^-b$ and $aD^+b$, often referred to as ``ambiguous'', ``paradoxical'', or ``dual'', both contingencies~\rf{cont-01} and~\rf{cont-10} do occur.
It is customary to eliminate such interactions in gene-network modeling, as they rarely appear in Nature.
This 
phenomenon justifies taking strict negative or positive interactions, as in~\rf{eq-strict},
thus obtaining a coincidence between both notions of interaction.

For monotonicity-based regulation, ``non-monotonic'' would be a preferable term, since such an interaction is neither monotonic nor anti-monotonic.

\paragraph{Single interactions.}
Further differences between both kinds of signed interaction appear when we take an interaction by itself.
Note first that an interaction satisfying $M^+$ does accept the possibility of the absence of interaction as a result of the definitions of monotone~\rf{eq-monot} and antimonotone~\rf{eq-anti-monot} interactions accepting equalities.
Hence, $M^+$ denotes a possibly \emph{irrelevant} interaction (i.e., possibly satisfying $M^-\land M^+$).
We can name this a ``strict, optional, and positive'' interaction.

By contrast, an interaction satisfying $D^+$
accepts the possibility of a \emph{dual} interaction (i.e., possibly satisfying $D^-\land D^+$).
We can term this a ``possibly dual, mandatory, and positive'' interaction.
\emph{This suggests enriching the labels of an ordinary interaction graph with Boolean combinations of either $M^+$ and $M^-$ or $D^+$ and $D^-$.}

\paragraph{Interaction graph redux.}
Tables~\ref{tab-griffin-mon} and~\ref{tab-griffin-der} respectively list all non-trivial possibilities for monotonicity-based and derivative-based interactions.
Figure~\ref{fig-venn-4} summarizes these tables. 

Following Nature, we would discard non-monotonic/dual and retain strict regulations;
that is, we could use six possibilities of Tables~\ref{tab-griffin-mon} and~\ref{tab-griffin-der}.
In gene regulation, where not all experimental information may be available, being able to specify such families of networks is both useful and important.
For example, it may be suspected that a certain interaction exists, without having hard evidence; in such a case, an optional interaction may be used.
It may also happen that we know that a certain interaction between two genes exists, but there is no indication as to the sign of such an interaction; a formula with ``unknown sign'' can then be employed.
A tool based on a SAT solver and using Boolean combinations of $D^-$ and $D^+$ (where the enriched interaction graphs are called R-graphs) is described in~\cite{munoz-carrillo-azpeitia-rosenblueth-18}.

The different meanings of both notions of regulation attest to the fact that, although these notions can be made to coincide, important differences appear when Boolean combinations are considered.

\begin{table}
\begin{center}
\begin{tabular}{ll}
\multicolumn{1}{c}{\emph{label monotonicity}} & \emph{intuition} \\ \hline\hline\hline
$M^+\land\neg M^-$ & strict, mandatory, positive \\
$M^-\land\neg M^+$ & strict, mandatory, negative\\
$M^-\oplus M^+$ & strict, mandatory, unknown sign\\
$M^+$ & strict, optional, positive\\
$M^-$ & strict, optional, negative\\
$M^-\lor M^+$ & strict, optional, unknown sign\\
$\neg M^-\land\neg M^+$ & non-mon., mandatory\\
$M^-\leftrightarrow M^+$ & non-mon., optional\\
$\neg M^-$ & poss.\ non-mon., mandatory, positive\\
$\neg M^+$ & poss.\ non-mon., mandatory, negative\\
$\neg(M^-\land M^+)$ & poss.\ non-mon., mandatory, unknown sign\\
$\neg M^-\lor M^+$ & poss.\ non-mon., optional, positive\\
$M^-\lor\neg M^+$ & poss.\ non-mon., optional, negative\\
$M^-\land M^+$ & irrelevant\\
$\textit{true}$ & poss.\ non-mon., optional, unknown sign
\end{tabular}
\end{center}
\caption{\label{tab-griffin-mon} Non-trivial Boolean combinations of monotonicity-based interactions.} 
\end{table}

\begin{table}
\begin{center}
\begin{tabular}{ll}
\multicolumn{1}{c}{\emph{label derivative}} & \emph{intuition} \\ \hline\hline\hline
$D^+\land\neg D^-$ & strict, mandatory, positive\\
$D^-\land \neg D^+$ & strict{,} mandatory{,} negative\\
$D^-\oplus D^+$ & strict{,} mandatory{,} unknown sign\\
$D^+$ & poss.\ dual{,} mandatory{,} positive\\
$D^-$ & poss.\ dual{,} mandatory{,} negative\\
$D^-\lor D^+$ & poss.\ dual{,} mandatory{,} unknown sign\\
$\neg D^-\land \neg D^+$ & irrelevant\\
$D^+\leftrightarrow D^-$ & dual{,} optional\\
$\neg D^-$ & strict{,} optional{,} positive\\
$\neg D^+$ & strict{,} optional{,} negative\\
$\neg(D^+\land D^-)$ & strict{,} optional{,} unknown sign\\
$D^+\lor\neg D^-$ & poss.\ dual{,} optional{,} positive\\
$D^-\lor\neg D^+$ & poss.\ dual{,} optional{,} negative\\
$D^+\land D^-$ & dual{,} mandatory\\
$\textit{true}$ & poss.\ dual, optional{,} unknown sign
\end{tabular}
\end{center}
\caption{\label{tab-griffin-der} Non-trivial Boolean combinations of derivative-based interactions~\cite{munoz-carrillo-azpeitia-rosenblueth-18}.} 
\end{table}

\begin{figure}
\begin{center}
\includegraphics[scale=0.5]{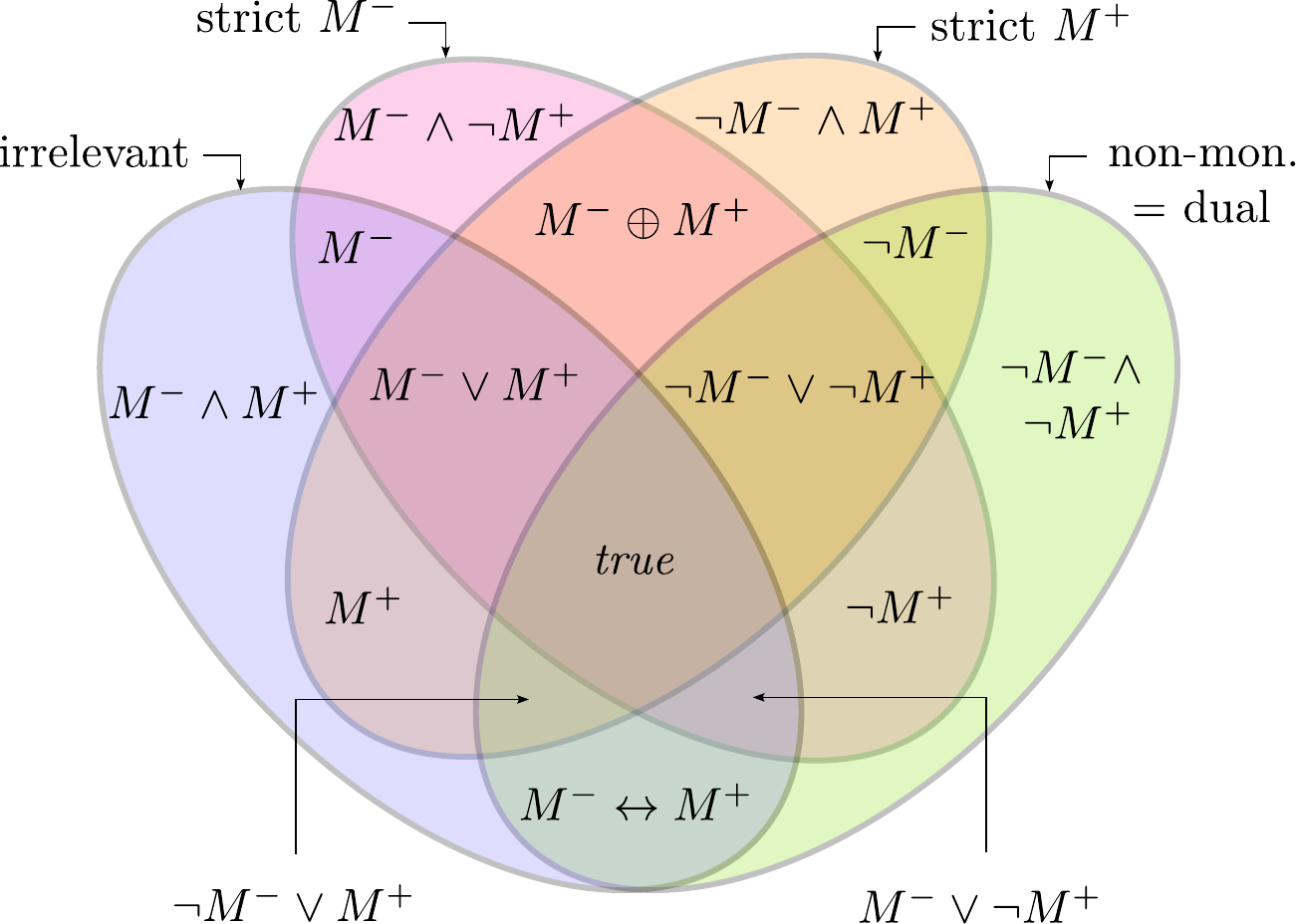}
\end{center}
\caption{\label{fig-venn-4} Diagram of edge labels for monotonicity-based interaction. The corresponding diagram for derivative-based interaction is obtained by substituting $\neg D^-$ for $M^+$ and $\neg D^+$ for $M^-$.}
\end{figure}

\section{Concluding remarks}
\label{sect-concl}
We started by establishing a firm basis of ADFs in Dung's formalism through Caminada's approach.
Next, we gave the same definition for both ADF and BN, thus bypassing having to state the most basic coincidences between these formalisms.
We then explained away the notion of conflict freedom.
In addition, we covered several algorithms for computing preferred interpretations and minimal trap spaces, and reported the performance of two tools.
Moreover, inspired by the fact that in the BN-network literature two signed notions occur (namely monotonicity based and derivative-based), we studied the differences between such notions and advocated an enriched influence graph, enabling reasoning about incomplete frameworks/networks.

\paragraph{Related work.}
Independently from us, Heyninck, Knorr, and Leite have also observed~\cite{heyninck-knorr-leite-24} connections between ADFs and BNs.
After noting similarities between both formalisms, these authors establish the coincidence between (a) three-valued interpretations and symbolic states, (b) admissible extensions and trap spaces, and (c) preferred extensions and minimal trap spaces, along with some differences, such as the lack of the notion of ``self-supporting'' arguments in Boolean networks and the absence in argumentation of the state-transition graph.
Subsequently, they remark the common occurrence of derivative-based reasoning and conclude by asserting that complexity results in one area also hold in the other area.

\subsection{Coincidence of preferred interpretations with minimal trap spaces}
Ideally, Dung abstract argumentation is interested in stable extensions, i.e., conflict-free sets of arguments that attack all other arguments.
As this notion is often too stringent, it is normally relaxed into that of sets of arguments that attack all arguments that attack the set, i.e., admissible sets.
When the attack relation is generalized to arbitrary Boolean functions, ADFs, this notion becomes that of admissible interpretations.
In this concept, there is no notion of time.

Boolean networks, on the other hand, are an offshoot of dynamical systems, where time is pivotal.
Often a Boolean network is defined as a function $f : \set{0,1}^n \ra \set{0,1}^n$, and an application of $f$ is seen as the lapse of one time step.
Time, in turn, is tied to the notions of transitions, update disciplines, and attractors.
As an approximation to attractors, trap spaces were proposed.
A trap space is a three-valued interpretation $v$ such that for each of its completions $x\in\sem{v}$, $f(x)$ is also one of its completions.
But trap spaces are independent of the update discipline.
Thus the coincidence of (a) trap spaces and admissible interpretations and of (b) minimal trap spaces and maximal admissible (i.e., preferred) interpretations.

\subsection{Further directions for research}
In spite of sharing many concerns, abstract dialectical argumentation and Boolean gene networks have interacted little with each other as research fields.
Hence, it is not surprising that some concerns be more developed in one area than in the other, which creates avenues for research.

\subsubsection{Parallel developments}

\paragraph{Computation.}
In our benchmarks, the Boolean-network tool we selected, \texttt{tsconj}, performed better for larger networks than the chosen ADF tool, \texttt{k++ADF}, for preferred interpretations/minimal trap spaces.
There are, however, other semantics of interest in argumentation that have not arisen in the field of Boolean gene networks, such as semi-stable, naive, and stage semantics.
Hence, the question of how to extend \texttt{tsconj} to compute these other semantics emerges.

Baumann and Heinrich have observed that, for BADFs, it is possible to bypass the direct implementation of the definition of the characteristic operator $\Gamma_\mc{D}$ with Kleene's strong three-valued logic~\cite{baumann-heinrich-23,kleene-52}.
Such a contribution might impact
tools for Boolean networks.

\paragraph{Repair.}
In general, when new observations show a discrepancy between reality and our model of reality, we must repair the model.
Boolean gene networks are normally built gradually as experiments are performed, hence the model must be repaired.
Repair techniques are also of interest in argumentation~\cite{baumann-ulbricht-19}, where arguments may become available incrementally.

\subsubsection{Argumentation benefitted from BNs}

\paragraph{Abstraction.}
Abstraction formalized by Galois connections, as existing in gene regulation~\cite{fages-soliman-08}, is yet to be studied in abstract argumentation.

\paragraph{Analysis based on interaction graphs.}
Static analysis based on interaction graphs~\cite{aracena-08,mori-mochizuki-17,pauleve-richard-12,remy-ruet-08,remy-ruet-thieffry-08,richard-10,richard-19,richard-comet-07} is possibly more developed in Boolean networks.
Inspired by Thomas's conjectures~\cite{thomas-d-ari-90},
this line of research attempts to obtain properties of the state transition graph of a BN from its interaction graph.
Hence, another direction of research would be to determine which of these results transfer to ADFs.

\paragraph{Boolean network decomposition.}
Decomposition algorithms, aimed at being able to handle larger systems have been developed~\cite{su-pang-paul-19} for Boolean networks, and would probably also be of interest in argumentation.

\paragraph{Counterfactual reasoning.}
In Boolean-network reasoning, hypothetical situations are often consideren as \emph{mutations} (where one gene is stuck at a particular value).
These situations would correspond to counterfactual reasoning in logic.

\paragraph{Nested canalization.}
The Boolean-network notion of canalization, which occurs in ADFs as monotonicity, has been further developed as ``nested canalization''~\cite{jarrah-raposa-laubenbacher-07,li-adeyeye-murrugarra-aguilar-laubenbacher-13}.
It is natural to wonder, therefore, if such an extended concept might have a meaning in argumentation.

\paragraph{Opinion diffusion.}
The role of time in Boolean networks strongly suggests investigating the extension of ADFs with time.
This would have close connections with opinion diffusion viewed as a generalization of argumentation~\cite{grandi-lorini-perrussel-15}.

\paragraph{Reduction.}
Finally, reduction techniques~\cite{klarner-14,naldi-remi-thieffry-chaouiya-11,saadatpour-albert-reluga-13,veliz-cuba-11} have existed for some time in Boolean networks.
They shrink the network while preserving properties of interest.
These would be valuable tools for argumentation.

\subsubsection{BNs benefitted from argumentation}
\paragraph{Different kinds of support.}
Yet another terrain worth investigating is that of the different kinds of support~\cite{amgoud-cayrol-lagasquie-schiex-08,cohen-gottifredi-garcia-simari-14,nouioua-risch-10,polberg-oren-14} that have emerged in argumentation, suggesting that considering such kinds of support might also have a meaning in gene regulation.

\paragraph{Graded argumentation.}
Graded argumentation~\cite{grossi-modgil-19} is a generalization of Dung's formalism that has not been explored for Boolean networks, as far as we know.

\paragraph{Logic.}
From its advent, abstract argumentation has remained close to logic~\cite{besnard-cayrol-lagasquie-schiex-20,boella-hulstijn-van-der-torre-05,caminada-gabbay-09,grossi-09,prakken-vreeswijk-02,proietti-yuste-ginel-21,schwarzentruber-vesic-rienstra-12}.
These works may inspire logical approaches to Boolean networks.

\section*{Acknowledgments}
We would like to express gratitude to
Florian Bridoux, Miguel Carrillo, Victor David, Christopher Leturc, and Adrien Richard with whom we had fruitful discussions.
We gladly mention Johannes Wallner's help, who directed us to \texttt{k++ADF}.
Finally,
we would like to thank the Centro de Ciencias de la Complejidad of the Universidad Nacional Aut\'onoma de M\'exico for having granted us access to their cluster to perform our benchmarks; Jos\'e Luis Gordillo helped us configure and use the cluster.

\bibliographystyle{plain}
\bibliography{review}
\end{document}